\newtheorem{theorem}{Theorem}
\newtheorem{lemma}{Lemma}
\newtheorem{proposition}{Proposition}
\theoremstyle{definition}
\journal{\vspace{1cm}}
\begin{document}
\begin{frontmatter}
\title{Effects of anti-infection behavior on the equilibrium states of an infectious disease}

\author[1]{Andrés David Báez Sánchez}
\ead{adsanchez@utfpr.edu.br}
\author[1]{Nara Bobko\corref{cor1}}
\ead{narabobko@utfpr.edu.br}
\cortext[cor1]{Corresponding author}
%\fntext[fn1]{This is the first author footnote.}
\address[1]{Mathematics Department, Federal University of Technology, Av. Sete de Setembro, 3165, 80230-901, Curitiba, Paraná, Brazil.}

\begin{abstract}
We propose a mathematical model to analyze the effects of anti-infection behavior on the equilibrium states of an infectious disease. The anti-infection behavior is incorporated into a classical epidemiological SIR model, by considering the behavior adoption rate across the population as an additional variable. We consider also the effects on the adoption rate produced by the disease evolution, using a dynamic payoff function and an additional differential equation. The equilibrium states of the proposed model have remarkable characteristics: possible coexistence of two locally stable endemic equilibria, the coexistence of locally stable endemic and disease-free equilibria, and even the possibility of a stable continuum of endemic equilibrium points. We show how some of the results obtained may be used to support strategic planning leading to effective control of the disease in the long-term.
\end{abstract}

\begin{keyword}
SIR model \sep Stability \sep Behavioral Epidemiology \sep Game Theory.
\MSC[2020] 92D30
%% keywords here, in the form: keyword \sep keyword
%% MSC codes here, in the form: \MSC code \sep code
%% or \MSC[2008] code \sep code (2000 is the default)
\end{keyword}
\end{frontmatter}

%\linenumbers

% main text ---------------------------------------------------------------------------------------
\section{Introduction}
\label{Sec:1}

The propagation of an infectious disease can be affected by changes in the population behavior and, at the same time, the population behavior concerning the disease can change due to changes in the perception of the epidemiological situation~\cite{ferrer2015risk, boily2005impact,zaidi2013dramatic}. Most recently, in the context of the COVID-19 pandemic, has been clear the relevant  role played by human behavior on the disease dynamic~\cite{Kraemereabb4218, Chinazzieaba9757, betsch2020how} and also has become evident the changes produced on the population behavior and policymakers due to the increase in the number of infected and death
cases~\cite{Remuzzi2020,cowling2020impact,hsiang2020effect}. % cases~\cite{Remuzzi2020,Pisano,parkerNYT}. 

Even before the COVID-19 emergency, there was a  well-recognized demand for mathematical models of infectious diseases considering aspects of the population behavior~\cite{roberts2015nine, brauer2017mathematical}. 

Many mathematical and computational models for infectious diseases based on SIR models have already considered some type of  anti-infection strategies. Some works have incorporated implicitly the possibility of a dynamic preventive behavior, by considering rates of infection or transmission day may depend on some of the  epidemiological variables S, I, or R~\cite{o1997epidemic, ruan2003dynamical, pathak2010rich, lahrouz2012complete, liu2012infectious, seo2013stability, dubey2015dynamics, wang2015sirs, baez2020equilibria}. For other models considering behavioral features see~\cite{manfredi2013modeling}.

Vaccination, as a form of anti-infection behavior, has been considered assuming that part of the susceptible population goes directly into the removed population or adding additional compartments for partially immune population~\cite{yusuf2012optimal,shi2009effect, alam2019three}. For other models considering vaccination see~\cite{rohani2011modeling, martcheva2015introduction,anderson1992infectious}.

In~\cite{bauch2005imitation} a model for vaccination-related behavior is considered using an additional variable corresponding with the rate of vaccination at birth. This new variable  interacts with the infection dynamics in the SIR model and is affected by a differential equation that depends on the infected population $I$. In the present work, we use a similar idea and introduce a behavioral variable related to the adoption rate across the population of some anti-infection behavior. This variable is incorporated into a classical epidemiological SIR model. The dynamics effects on the adoption rate are introduced using an additional differential equation and a dynamic linear payoff depending on the epidemiological variables.

We focus on the study of equilibrium states as an attempt to understand the long-term characteristic and consequences of the interplay between population behavior and disease dynamics.

The equilibrium states of the proposed model have remarkable characteristics: possible coexistence of two locally stable endemic equilibria, the coexistence of locally stable endemic and disease-free equilibria, and even the possibility of a stable continuum of endemic equilibrium points. We will describe how some of the results obtained may be used to support strategic planning leading to effective control of the infectious disease in the long-term.

The paper is organized as follows. In Section~\ref{Sec:2} we develop the mathematical model and discuss some basic characteristics. In Section~\ref{Sec:3} we discuss the existence and stability of its equilibrium points, which is the main focus of the present work. We will show that the set of equilibrium points of the proposed model, have some remarkable characteristics in the context of epidemiological models: coexistence of two locally stable endemic equilibria, the coexistence of locally stable endemic and disease-free equilibria, and the possibility of a stable continuum of endemic equilibrium points. In Section~\ref{Sec:4} we use some of the results to obtain thresholds for parameters leading to effective long-term control of the epidemic disease. We conclude with some final remarks in Section~\ref{Sec:5} and an Appendix presenting proofs of some of the results established in the paper.

\section{A Mathematical Model for an Infectious Disease with an Anti-Infection behavior}
\label{Sec:2}

Compartment models, and particularly SIR models, have been extensively used for mathematical modeling of infectious diseases~\cite{brauer2017mathematical}. The main idea behind SIR models is to consider a population divided into three disjoint categories or compartments: susceptible individuals, infected individuals, and removed (recovered or deceased) individuals, denoted by $S$, $I$, and $R$ respectively. If $N$ denotes the total population, then we have $N=S+I+R$.

Depending on the modeling approach, the variables $S$, $I$, and $R$  may be considered as the absolute numbers of individuals in each group or as the proportion of individuals relative to the total population. In this work, we consider this latter approach. Therefore, considering the time dependency, we have that $S(t) + I(t) + R(t) = 1$ for all $t$.

Within these considerations, an SIR model with vital dynamics and constant population can be stated as
\begin{equation}\label{modelSIR} 
\begin{split}
\dfrac{dS}{dt} &= \mu-\beta\,S\,I-\mu\,S\\
\dfrac{dI}{dt} &= \beta\,S\,I-\mu\,I-\gamma I\\
\dfrac{dR}{dt} &= \gamma \, I-\mu\,R,
\end{split}
\end{equation}
with $S(0) + I(0) + R(0) = 1$. The positive real numbers $\mu$, $\beta$, and $\gamma$ can be interpreted as birth-mortality rate, infection rate, and recovery rate respectively. The constant population consideration is implicit into the system, since $N(t) = 1$ is the only solution of 
$$\dfrac{dN}{dt} = \dfrac{dS}{dt} + \dfrac{dI}{dt} + \dfrac{dR}{dt} = \mu(1-N)$$
satisfying $N(0) = 1$.  For more details about SIR-type models 
see~\cite{rohani2011modeling, martcheva2015introduction}.

Now, consider that there is some behavior or action that can be taken to avoid or reduce the impact of the infection. This behavior can be interpreted as a vaccination initiative, a preventive hygienic measure, a quarantine restraint, or a combination of similar actions. Let $x$ be the proportion of the population following this anti-infection behavior. 

When the population is considering this behavior or action, the perception of the benefit obtained by following it, may not always be constant. In fact, depending on the epidemiological state, the benefit may vary. For example, in a situation with a small proportion of infected, the benefit of adopting the anti-infection behavior may be considered irrelevant for some part of the population.
On the other hand, in a situation where the majority of the population has no immunity, the benefits may be considered high. To analyze this kind of situation, we propose to consider that there exists a perceived payoff or  benefit obtained from the anti-infection behavior that depends on the epidemiological variables $S$, $I$, and $R$ according to a function $p$ given by
\begin{equation}\label{payoff1}
p(S,I,R)=-a_c+a_I\,I+a_S\,S+a_R\,R,
\end{equation}
where $a_c$, $a_I$, $a_S$, and $a_R$ are positive constants. The constant $a_c$ can be interpreted as the fixed cost of adopting the anti-infection behavior, and the constants $a_I$, $a_S$, and $a_R$  can be interpreted as the behavior-adoption benefit associated with the proportion of infected, susceptible, and removed members of the population, respectively. As we have considered that $S+I+R=1$, we have that
\begin{align*}
-a_c+a_I\,I+a_S\,S+a_R\,R 
&= -a_c+a_I\,I+a_S\,S+a_R\,(1-S-I) \\
&= -(a_c-a_R) + (a_I-a_R)\,I+(a_S-a_R)\,S \\
&= -a_0 + a_1\,I+a_2\,S.
\end{align*}
Therefore, the payoff functions can be simplified to obtain
\begin{equation}\label{payoff2}
p(S,I)=-a_0+a_1\,I+a_2\,S.
\end{equation}

Based on the SIR model~\eqref{modelSIR} and the payoff function~\eqref{payoff2}, we propose the following model considering simultaneously the epidemiological variables $(S,I,R)$ and the behavioral state $x$:
\begin{equation}\label{model}
\begin{split}
\dfrac{dS}{dt} &= \mu-(1-x)\beta\,S\,I-\mu\,S\\
\dfrac{dI}{dt} &= (1-x)\beta\,S\,I-\mu\,I-\gamma I\\
\dfrac{dR}{dt} &= \gamma \, I-\mu\,R\\
\dfrac{dx}{dt} &= x(1-x)(-a_0+a_1\,I+a_2\,S)
\end{split}
\end{equation}
with initial conditions in $[0,1]$, and $N(0)= S(0)+I(0)+R(0)=1$. The three initials equations are essentially the SIR model~\eqref{modelSIR} with a variable infection rate depending on the behavioral variable $x$. If $x=1$, there is no infection at all. If $x=0$, the diseases follow the classical SIR dynamics. The fourth equation may be seen as a logistic equation for $x$ with a growth rate depending on the variables $S$ and $I$ and on the cost/payoff parameters $a_0, a_1, a_2$. Thus, depending on the interplay between these values over time, the adoption rate $x$ may increase or decrease, leading also to a dynamically decreasing or increasing infection rate.  The differential equation for $x$ can also be obtained from the replicator equations in evolutionary game theory (see~\cite{weibull1997evolutionary}), applied to a two-behavior game (follow or not follow the anti-infection behavior) with a symmetric payoff given by $-a_0+a_1\,I+a_2\,S$. 

The main goal of the present work is to study the long-term behavior of model~\eqref{model} in terms of its equilibrium points. To achieve this, we will consider a simplified model obtained by re-scaling some of the parameters. 
Considering
\begin{align}\label{rescaling}
\tau &=t\mu; \quad \widetilde{\beta}=\dfrac{\beta}{\mu};\quad \widetilde{\gamma}=\dfrac{\gamma}{\mu}; \quad \widetilde{a}_0=\dfrac{a_0}{\mu}; \quad\widetilde{a}_1=\dfrac{a_1}{\mu};\quad \widetilde{a}_2=\dfrac{a_2}{\mu} \nonumber\\
k&=1+\dfrac{\gamma}{\mu}=1+\widetilde{\gamma} 
\quad \text{ and } \quad 
R_0=\dfrac{\beta}{\mu+\gamma}=\dfrac{\widetilde{\beta}}{1+\widetilde{\gamma}}=\dfrac{\widetilde{\beta}}{k}. 
\end{align}
and replacing in~\eqref{model}, we obtain
\begin{equation}\label{model3}
 \begin{split}
 \dfrac{dS}{d\tau} &= 1-(1-x)kR_0\,S\,I-\,S\\
 \dfrac{dI}{d\tau} &= (1-x)kR_0\,S\,I-kI\\
 \dfrac{dR}{d\tau} &= (k-1)\, I-R\\
 \dfrac{dx}{d\tau} &= x(1-x)(-\widetilde{a}_0+\widetilde{a}_1\,I+\widetilde{a}_2\,S),
 \end{split}
\end{equation}
with initial conditions in $[0,1]$ and $N(0)= S(0)+I(0)+R(0)=1$.

Note that the parameter $k>1$ and the parameter $R_0$ is also a positive real number. The parameter $R_0$ is called the basic reproduction number and has a fundamental role in the description of the equilibria stability in the classical SIR model~\cite{rohani2011modeling, martcheva2015introduction}. The parameter $R_0$ can be interpreted as the number of cases one case generates, on average, in an uninfected population. It represents a measure of the effectiveness of the infection. We introduce below the term $R_p$, that will be important in the forthcoming analysis of equilibrium points
$$R_p = \dfrac{\widetilde{a}_1 - k\widetilde{a}_2}{\widetilde{a}_1 - k\widetilde{a}_0}.$$

Note that $R_p$ depends both on the payoffs associated with the anti-infection behavior and on the population parameter $k = 1+\tfrac{\gamma}{\mu}$. We will see in Section~\ref{Sec:3} that under the effects of the anti-infection behavior, the constant $R_p$ plays a similar role to the one played by the basic reproduction number $R_0$ in the classical SIR model.

We end this section proving that the variables in~\eqref{model3} properly represent population proportions, in the sense that $S, I, R$ and $x$ belongs to the interval $[0,1]$ for all $t \geqslant 0$, and that $N(\tau) = S(\tau)+I(\tau)+R(\tau)=1$ .

\begin{lemma} \label{lemma_invariant} 
The set $\Omega = \{x \in [0,1], S\geq 0, I\geq 0, R\geq 0 \text{ and } S+I+R=1\}$ is positively invariant under~\eqref{model3}.
\end{lemma}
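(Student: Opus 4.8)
The plan is to verify invariance in stages, first reducing the problem by exploiting conservation of the total population. I would begin by computing $\frac{dN}{d\tau}$ along~\eqref{model3}, where $N=S+I+R$. Adding the first three equations, the transmission terms $\pm(1-x)kR_0\,S\,I$ cancel and the remaining infected/removed terms combine as $-kI+(k-1)I=-I$, so that $\frac{dN}{d\tau}=1-N$. Since $N\equiv 1$ is the unique solution of this scalar linear equation with $N(0)=1$, any trajectory starting on the simplex $S+I+R=1$ stays on it for all $\tau$. This already forces $S,I,R\le 1$ whenever these variables are nonnegative, so the proof reduces to showing that the nonnegativity constraints $S,I,R\ge 0$ and the bounds $0\le x\le 1$ are preserved.

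For the behavioral variable I would observe that the right-hand side of the $x$-equation carries the factor $x(1-x)$, which vanishes identically on each of the hyperplanes $\{x=0\}$ and $\{x=1\}$. Hence both sets are invariant, and because the vector field of~\eqref{model3} is polynomial (in particular locally Lipschitz) solutions are unique; a trajectory with $x(0)\in(0,1)$ can therefore neither reach $x=0$ nor $x=1$ in finite time, while trajectories starting on those faces remain on them. This yields $x(\tau)\in[0,1]$ for all $\tau$.

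For the epidemiological variables I would use the standard subtangentiality (Nagumo-type) argument, checking the sign of each derivative on the corresponding coordinate face. On $\{S=0\}$ one has $\frac{dS}{d\tau}=1>0$, so the field points strictly inward; on $\{I=0\}$ one has $\frac{dI}{d\tau}=0$, so that face is invariant exactly as in the $x$ case; and on $\{R=0\}$ one has $\frac{dR}{d\tau}=(k-1)I\ge 0$, using $k>1$ together with the already-established $I\ge 0$. In each case the flow cannot carry a trajectory out of the nonnegative orthant, so $S,I,R$ remain nonnegative.

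I expect the only delicate point to be the faces on which the field is merely tangent rather than strictly inward, namely $\{I=0\}$ (and the edge $\{I=R=0\}$, where $\frac{dR}{d\tau}=0$ as well). The clean way to handle these is to note that $\{I=0\}$ is itself an invariant manifold, so uniqueness of solutions prevents any trajectory from crossing it; the nonnegativity of $R$ then follows because its derivative is controlled by $I$, so that $I$ must be treated before $R$ in the argument. Finally, since $\Omega$ is compact and forward-invariant, the local solutions guaranteed by Picard--Lindelöf extend to all $\tau\ge 0$, completing the proof.
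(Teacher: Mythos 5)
Your proposal is correct and follows essentially the same route as the paper: conservation of $N$ via summing the first three equations, invariance of $\{x=0\}$ and $\{x=1\}$ plus uniqueness to confine $x$ to $[0,1]$, and a sign check of the vector field on each coordinate face for $S$, $I$, $R$. Your treatment is in fact slightly more careful than the paper's (explicitly invoking invariance of $\{I=0\}$ before handling $R$, and noting global existence), but the underlying argument is identical.
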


\begin{proof}
Since $x(\tau) = 1$ and $x(\tau)=0$ are stationary solutions of
$$\dfrac{dx}{d\tau} =x(1-x)(-\widetilde{a}_0+\widetilde{a}_1\,I+\widetilde{a}_2\,S),$$
\noindent
the uniqueness of solutions ensures that $x(\tau)\in [0,1]$ for all $\tau\geqslant 0$, whenever $x(0)~\in~(0,1)$. Furthermore, from~\eqref{model3} we have that $\tfrac{dN}{d\tau} = \mu \, (1-N).$ Since $N(0)=1$, follows that  $S(\tau)+I(\tau)+R(\tau)=N(\tau)=1$ for all $\tau\geqslant 0$.

To prove that $S$, $I$, and $R$ are positives, we analyze the behavior of the solutions with initial conditions at the border of $\mathbb{R}^3_{\geqslant0}$. 
\begin{enumerate}[C{a}se 1.]
  \item If $S(0) = 0$ then $\tfrac{dS}{d\tau}(0) = 1 > 0$, therefore $S$ grows locally. 
  \item If $I(0) = 0$ then $\tfrac{dI}{d\tau}(0) = 0$, therefore $I(\tau)$ will remain non-negative.
  \item If $R(0) = 0$ then $\tfrac{dR}{d\tau}(0) = (k-1) I(0)$. In this case, if $I(0)=0$, then $\tfrac{dR}{d\tau}(0) = 0$, whence $R$ will remain non-negative. 
  On the other hand, if $I(0)> 0$ then $\tfrac{dR}{d\tau}(0) > 0$ since $k>1$. Thus $R$ grows locally.
\end{enumerate}
\end{proof}

\section{Equilibrium States}
\label{Sec:3}

\subsection{Existence}

In this subsection, we determine all the possible equilibrium points of  model~\eqref{model3} and its conditions for existence. The following lemma summarizes the results regarding the six different classes of equilibrium points that can be obtained.

\begin{lemma} \label{lemma_equilibriums} 
Any equilibrium point $P~=~(\bar{S},\bar{I},\bar{R},\bar{x})$ of model~\eqref{model3} satisfies that $\bar{I}~=~\tfrac{1}{k}\left(1-\bar{S}\right)$ and  $\bar{R}~=~\left(1-\tfrac{1}{k}\right)\left(1-\bar{S}\right)$. Thus all equilibrium points are determined by the values of $\bar{S}$ and $\bar{x}$. Furthermore, all the equilibrium points of model~\eqref{model3}  fall into one of the following categories: 
\begin{enumerate}[${P}_1$:]
 \item $\bar{S}=1$ and $\bar{x}=0$;
 \item $\bar{S}=1$ and $\bar{x}=1$;
 \item $\bar{S}=1$ and $\bar{x} \in [0,1]$, s.t. $\widetilde{a}_0=\widetilde{a}_2$;
 \item $\bar{S} = \tfrac{1}{R_0}$ and $\bar{x}=0$, s.t. $R_0 > 1$;
 \item $\bar{S} = \tfrac{1}{R_p}$ and $\bar{x}=1-\tfrac{R_p}{R_0}$, s.t. $R_0 > R_p > 1$ and $\widetilde{a}_1\neq k\widetilde{a}_0$;
 \item $\bar{S} =\tfrac{1}{R_0(1-\bar{x})}$ and $\bar{x} \in  \left(0,\tfrac{R_0-1}{R_0}\right)$, s.t. $ R_0>1$ and $k \widetilde{a}_0=\widetilde{a}_1=k\widetilde{a}_2$.
\end{enumerate}
\end{lemma}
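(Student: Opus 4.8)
The plan is to impose the equilibrium conditions $\tfrac{dS}{d\tau}=\tfrac{dI}{d\tau}=\tfrac{dR}{d\tau}=\tfrac{dx}{d\tau}=0$ on system~\eqref{model3} and solve the resulting algebraic system for $P=(\bar S,\bar I,\bar R,\bar x)$. First I would establish the two stated relations: adding the equilibrium versions of the $S$- and $I$-equations cancels the nonlinear term $(1-\bar x)kR_0\bar S\bar I$ and yields $1-\bar S-k\bar I=0$, hence $\bar I=\tfrac1k(1-\bar S)$; substituting this into the equilibrium $R$-equation $(k-1)\bar I-\bar R=0$ gives $\bar R=(1-\tfrac1k)(1-\bar S)$. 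This shows every equilibrium is pinned down by the pair $(\bar S,\bar x)$, so the remainder of the argument only needs the $I$- and $x$-equations.

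Next I would factor those two equations. The equilibrium $I$-equation reads $k\bar I\big[(1-\bar x)R_0\bar S-1\big]=0$, which, using $\bar I=\tfrac1k(1-\bar S)$, splits into the alternative $\bar S=1$ (the disease-free branch $\bar I=0$) or $(1-\bar x)R_0\bar S=1$ (the endemic branch, which forces $\bar x\neq 1$). The equilibrium $x$-equation $\bar x(1-\bar x)\big(-\widetilde{a}_0+\widetilde{a}_1\bar I+\widetilde{a}_2\bar S\big)=0$ splits into $\bar x=0$, $\bar x=1$, or the vanishing of the payoff $-\widetilde{a}_0+\widetilde{a}_1\bar I+\widetilde{a}_2\bar S=0$. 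The proof then amounts to enumerating the combinations of one alternative from each factored equation and discarding those that are empty or that violate $P\in\Omega$.

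On the disease-free branch $\bar S=1$ (so $\bar I=\bar R=0$) the payoff reduces to $-\widetilde{a}_0+\widetilde{a}_2$, and the three subcases of the $x$-equation give exactly $P_1$ ($\bar x=0$), $P_2$ ($\bar x=1$), and---when $\widetilde{a}_0=\widetilde{a}_2$ makes the payoff vanish identically---the continuum $P_3$ with $\bar x\in[0,1]$. On the endemic branch $(1-\bar x)R_0\bar S=1$, taking $\bar x=0$ forces $\bar S=1/R_0$, which lies in $\Omega$ with $\bar I>0$ precisely when $R_0>1$, giving $P_4$. The remaining possibility is the payoff equation; substituting $\bar I=\tfrac1k(1-\bar S)$ into $-\widetilde{a}_0+\widetilde{a}_1\bar I+\widetilde{a}_2\bar S=0$ and solving for $\bar S$ yields $\bar S=\tfrac{k\widetilde{a}_0-\widetilde{a}_1}{k\widetilde{a}_2-\widetilde{a}_1}$, which equals $1/R_p$ after identifying this ratio with the definition of $R_p$; combining with $(1-\bar x)R_0\bar S=1$ then gives $\bar x=1-\tfrac{R_p}{R_0}$, i.e. $P_5$, where imposing $\bar x\in(0,1)$ together with $\bar S,\bar I\in[0,1]$ translates into the condition $R_0>R_p>1$.

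The step I expect to be most delicate is the careful treatment of the degenerate parameter regimes, where the factored cases are not independent. The reduction $\bar S=1/R_p$ is valid only when $k\widetilde{a}_2\neq\widetilde{a}_1$ (so that $R_p$ is finite and nonzero) and requires $\widetilde{a}_1\neq k\widetilde{a}_0$ for $R_p$ to be defined at all; when instead $k\widetilde{a}_0=\widetilde{a}_1=k\widetilde{a}_2$ the payoff vanishes identically along the whole endemic branch, so the single equation $(1-\bar x)R_0\bar S=1$ is all that remains and produces the stable continuum $P_6$, with $\bar S=\tfrac{1}{R_0(1-\bar x)}$ and the admissible range of $\bar x$ determined by $\bar S\le 1$ and $\bar I\ge 0$. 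I would also verify that the boundary values of $\bar x$ in $P_5$ and $P_6$ coincide with already-listed equilibria ($\bar x=0$ recovering $P_4$, and $\bar x=\tfrac{R_0-1}{R_0}$ recovering the disease-free state $\bar S=1$), so that the six classes are exhaustive and the stated strict inequalities and open intervals correctly avoid double-counting.
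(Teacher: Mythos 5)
Your proposal is correct and follows essentially the same route as the paper: add the equilibrium $S$- and $I$-equations to get $\bar I=\tfrac1k(1-\bar S)$ and hence $\bar R$, then split on the disease-free branch $\bar S=1$ versus the endemic branch $(1-\bar x)R_0\bar S=1$, and combine with the factorization of the $x$-equation to enumerate $P_1$--$P_6$. Your write-up is in fact more explicit than the paper's (which leaves the derivation of $P_5$ and $P_6$ to ``basic manipulations''), and your attention to the degenerate regimes $\widetilde{a}_1=k\widetilde{a}_2$ and $k\widetilde{a}_0=\widetilde{a}_1=k\widetilde{a}_2$ and to the boundary values of $\bar x$ is a welcome addition rather than a deviation.
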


\begin{proof}
The equilibrium points of~\eqref{model3} are the solutions in $\Omega$ of the non-linear system
\begin{equation}\label{system_eq_points}
\begin{split}
1-(1-\bar{x})kR_0\,\bar{S}\,\bar{I}-\,\bar{S} &=0\\
 (1-\bar{x})kR_0\,\bar{S}\,\bar{I}-k\bar{I} &=0\\
(k-1)\, \bar{I}-\bar{R} &=0\\
\bar{x}(1-\bar{x})[-\widetilde{a}_0+\widetilde{a}_1\,\bar{I}+\widetilde{a}_2\,\bar{S}] &=0.
\end{split}
\end{equation}
Note from the first equation that $\bar{S}$ can not be equal to zero. Now, adding the first two equations in~\eqref{system_eq_points}, we obtain that any equilibrium point 
must satisfy $1-\bar{S}=k\bar{I}$. Therefore 
\begin{equation}\label{eqauxI}
\bar{I} = \dfrac{1}{k}\left(1-\bar{S}\right)
\end{equation}
and thus, from third equation in~\eqref{system_eq_points}, follows that
\begin{equation}\label{eqauxR}
\bar{R}= \left(1-\dfrac{1}{k}\right)\left(1-\bar{S}\right).
\end{equation}
Thus, if $\bar{S}=1$, then~\eqref{eqauxI} and~\eqref{eqauxR} implies that $\bar{I}=\bar{R}=0$ and the expressions for equilibrium types ${P}_1, {P}_2$ and ${P}_3$ can be obtain from fourth equation in~\eqref{system_eq_points}.

If $\bar{S}\not= 1$, then~\eqref{eqauxI} implies that $\bar{I}\not=0$. Thus, from second equation in~\eqref{system_eq_points}, we obtain 

$$(1-\bar{x})R_0\bar{S}=1,$$ 
which implies that in this case $\bar{x}\not=1$ and therefore 
\begin{equation}\label{eqauxS}
\bar{S}=\dfrac{1}{R_0(1-\bar{x})}.
\end{equation}
Equation~\eqref{eqauxS} implies the expression for equilibrium $P_4$ but additionally, can be used jointly with equation \eqref{eqauxI} and the fact that $R_p = \tfrac{\widetilde{a}_1 - k\widetilde{a}_2}{\widetilde{a}_1 - k\widetilde{a}_0}$, obtain by basic manipulations of the fourth equation in~\eqref{system_eq_points}, the expressions and conditions defining $P_5$ and $P_6$ .
\end{proof}

\subsection{Comments on Lemma~\ref{lemma_equilibriums}} 
Model~\eqref{model3} has more possible equilibrium points that the classic SIR model. Indeed, the classical SIR model has only two equilibrium points: a disease-free equilibrium and an endemic equilibrium that corresponds precisely to equilibria $P_1$ and $P_4$. In addition,  model~\eqref{model3} have other disease-free equilibria ($P_2$ and $P_3$) and other endemic equilibria ($P_5$ and $P_6$). 

The equilibrium points $P_1$ and $P_2$ differs only in the last component: in $P_1$ no one is adopting the anti-infection behavior and in $P_2$ all population does. Although $P_2$ seems an ideal scenario, it may not be realistic even if the prevention policy has an insignificant cost.

Equilibrium type $P_3$ also differs from $P_1$ only in the last component. However, note that $P_3$ represents an infinite set of equilibrium, since for each $\bar{x}$ we obtain a different equilibrium point. In particular, $P_3$ include $P_1$ and $P_2$ when $\bar{x} = 0$ and $\bar{x} = 1$, respectively. In fact, $P_3$ represents a connected path between these two disease-free equilibria. 

Note that the family of equilibria $P_3$ exists only if $\widetilde{a}_0=\widetilde{a}_2$. In terms of the original parameters, this is equivalent to $a_c = a_S$, that is, the fixed cost has to be exactly equal to the payoff associated with the proportion of susceptible members of the population. Such equality between parameters may be unrealistic, thus we consider $P_3$ of minor practical interest. This also applies to equilibrium family $P_6$ which has also a condition for its existence involving equality between parameters.

As mentioned before, $P_4$ corresponds to the endemic equilibrium of the classical SIR model and has the same existence condition ($R_0 > 1$) in that context.

In turn, the equilibrium point $P_5$ does not coincide with any equilibrium of the classic SIR model and can be considered as a more realistic scenario. In the $P_5$ case, the infection is present ($\bar{I}\neq0$) and only a part of the population adopted the anti-infection behavior. Note also that the condition $R_p < R_0$, implies that the proportion of the susceptible population in $P_5$ is greater than in $P_4$. Consequently, the proportion of infected population in $P_5$ is lower than in $P_4$. Therefore, \textbf{$P_5$ can be interpreted as a desirable  situation where anti-infection behavior reduces the impact of the disease in the long-term}. 

Note also that in this $P_5$ scenario, for a fixed value of $R_0$, the larger is $R_p$, the smaller is the proportion of infected people. This relationship between $R_0$, a parameter related only to the disease, and $R_p$, a parameter related to the cost of intervention, allows an analysis of the effects of behavior and  cost/payoff changes in the disease dynamic. 
The best-case scenario would be one with a minimal value for $\bar{I}$, or equivalently,  a maximal value for $\bar{x}$. 
This will occur if $R_p$ tends to $1$ and in the limit this will imply  $\widetilde{a}_2 = \widetilde{a}_0$ (existence condition of $P_3$). 

The worst-case scenario for $P_5$ would be one where  $R_p$ goes to $R_0$ because in this case, $\bar{x}$ goes to zero and  $P_5$ goes to $P_4$. 

Equilibrium type $P_6$ represents an infinite set of endemics equilibrium points, one for each $\bar{x} \in \left(0,\tfrac{R_0-1}{R_0}\right)$. 
Unlike disease-free equilibria $P_3$, in $P_6$  the value of $\bar{x}$ will affect the value of $\bar{S}$ ($\bar{I}$ and $\bar{R}$ too). Note that if $\bar{x}$ approach  $\tfrac{R_0-1}{R_0}$, then $\bar{S}$ approach  $1$. This means that if the proportion of the population adopting the prevention behavior increase, the proportion of susceptible population also increases (and the proportion of infected population decrease).

Note that, when $\bar{x}$ goes to $0$, $P_6$ goes to $P_4$, and when $\bar{x} = \tfrac{R_0-1}{R_0}$, the equilibrium $P_6$ goes to a $P_3$ equilibrium point. In fact, when  $k \widetilde{a}_0=\widetilde{a}_1=k\widetilde{a}_2$ both sets of equilibria $P_3$ and $P_6$ coexist and have a linking point at $\left(1,0,0,\tfrac{R_0-1}{R_0}\right)$. 
Lastly, note that equilibrium points $P_6$ cannot co-exist with equilibrium point $P_5$, since its existence conditions are incompatible.

\subsection{Jacobian Matrix and its Characteristic Polynomial} 

We are interested in study the stability of equilibrium points of \eqref{model3}. 
Then, it will be useful to consider the associated Jacobian matrix given by:
{\small
$$J(\!S,\!I,\!R,\!x)\!=\! 
\begin{bmatrix} 
 -(1-x)I k R_0 -1 & -(1-x)k R_0 S & 0 & I k R_0 S \\
 (1-x)I k R_0 & (1-x)k R_0S -k & 0 & -I k R_0 S \\
0& k-1 &-1&0\\ 
(1-x) x\widetilde{a}_2 & (1-x) x\widetilde{a}_1 & 0& (1-2 x)(-\widetilde{a}_0+\widetilde{a}_1 I+\widetilde{a}_2 S) \\
\end{bmatrix}.
$$
}
The characteristic polynomial of $J(S,I,R,x)$ can be written as:
\begin{equation}\label{eq_polynomial}
p(\lambda) = |J(S,I,R,x)-\lambda I|= (-1-\lambda) q(\lambda)
\end{equation}
where
{\small
\begin{equation*}
q(\lambda) = \left |
\begin{matrix} 
 -(1-x)I k R_0 -1 -\lambda & -(1-x)k R_0 S  & I k R_0 S \\
 (1-x)I k R_0 & (1-x)k R_0S -k -\lambda & -I k R_0 S \\
(1-x) x\widetilde{a}_2 & (1-x) x\widetilde{a}_1 &  (1-2 x)(-\widetilde{a}_0+\widetilde{a}_1 I+\widetilde{a}_2 S)-\lambda \\
\end{matrix}\right |.
\end{equation*}
}

\subsection{Stability of $P_1$, $P_2$, $P_4$, and $P_5$}

It is clear from \eqref{eq_polynomial} that for any equilibrium point, the Jacobian has at least one negative eigenvalue  $\lambda_1=-1$ and that additional eigenvalues can be studied analyzing the equation $q(\lambda) = 0$. This can be used to establish the following subsection result about the stability of equilibrium points  $P_1$, $P_2$, $P_4$, and $P_5$ whose  complete proof is presented in the Appendix.

\begin{theorem} \label{thm_stability_J} 
Consider system~\eqref{model3}. Assume that $\widetilde{a}_0 \neq \widetilde{a}_2$, $\widetilde{a}_0 \neq \widetilde{a}_1/k$, $R_0 \neq 1$, and $R_0 \neq R_p$. 
\begin{enumerate}[1.]
    \item If $R_0 < 1$ then
    \begin{enumerate}[i.]
        \item $P_1$ is locally asymptotically stable if $\widetilde{a}_0 > \widetilde{a}_2$; \label{thm_stability_J_1i}
        \item $P_2$ is locally asymptotically stable if $\widetilde{a}_0 < \widetilde{a}_2$, \label{thm_stability_J_1ii}
        \item $P_4$ and $P_5$ do not exist. 
    \end{enumerate}
    \item If $R_0 > 1$ and $R_0 < R_p$, then
    \begin{enumerate}[i.]
        \item $P_4$ is locally asymptotically stable if $\widetilde{a}_0 > \widetilde{a}_2$; \label{thm_stability_J_2i}
        \item $P_2$ is locally asymptotically stable if $\widetilde{a}_0 < \widetilde{a}_2$; \label{thm_stability_J_2ii}
    	\item $P_1$ is not stable;
    	\item $P_5$ do not exist. 
    \end{enumerate}
    \item If $R_0 > 1$ and $R_0 > R_p$,
    \begin{enumerate}[i.]
        \item $P_5$ is locally asymptotically stable if $\widetilde{a}_0 > \widetilde{a}_2$, and $\widetilde{a}_0 < \widetilde{a}_1/k$;\label{thm_stability_J_3i}
        \item $P_4$ is locally asymptotically stable if $\widetilde{a}_0 > \widetilde{a}_2$, and $\widetilde{a}_0 > \widetilde{a}_1/k$;\label{thm_stability_J_3ii}
        \item $P_2$ is locally asymptotically stable if $\widetilde{a}_0 < \widetilde{a}_2$, and $\widetilde{a}_0 < \widetilde{a}_1/k$;\label{thm_stability_J_3iv} 
        \item $P_2$ and $P_4$ are locally asymptotically stable if $\widetilde{a}_0 < \widetilde{a}_2$, and $\widetilde{a}_0 > \widetilde{a}_1/k$;\label{thm_stability_J_3iii}
    \item $P_1$ is not stable.
    \end{enumerate}
\end{enumerate}
\end{theorem}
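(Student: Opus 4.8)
The plan is to exploit the factorization \eqref{eq_polynomial}: at every equilibrium the Jacobian has the eigenvalue $\lambda_1=-1<0$, so local asymptotic stability is equivalent to all roots of the cubic $q(\lambda)$ having negative real part. Here $q(\lambda)=\det(M-\lambda\,\mathrm{Id})$, where $M$ is the $3\times3$ submatrix of $J$ obtained by deleting the row and column of $R$ (rows and columns $1,2,4$). The nondegeneracy hypotheses $\widetilde{a}_0\neq\widetilde{a}_2$, $\widetilde{a}_0\neq\widetilde{a}_1/k$, $R_0\neq1$, $R_0\neq R_p$ are exactly what guarantees hyperbolicity (no root of $q$ on the imaginary axis) at the points considered, so that linearization decides stability. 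I would evaluate $M$ at each of $P_1,P_2,P_4,P_5$ using Lemma~\ref{lemma_equilibriums} and the relations $\bar I=\tfrac1k(1-\bar S)$ and, for $P_4,P_5$, $(1-\bar x)R_0\bar S=1$.

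For $P_1,P_2,P_4$ the matrix $M$ is block triangular, so the analysis is essentially explicit. At $P_1$ ($\bar x=0,\bar I=0$) and $P_2$ ($\bar x=1,\bar I=0$) the behavioral coordinate decouples and $M$ is triangular, with eigenvalues $\{-1,\,k(R_0-1),\,\widetilde a_2-\widetilde a_0\}$ at $P_1$ and $\{-1,\,-k,\,\widetilde a_0-\widetilde a_2\}$ at $P_2$; items \ref{thm_stability_J_1i}, \ref{thm_stability_J_1ii} and the instability of $P_1$ for $R_0>1$ follow at once, as does the (case-independent) fact that $P_2$ is stable precisely when $\widetilde a_0<\widetilde a_2$. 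At $P_4$ ($\bar x=0$) the entries coupling the SIR block to the behavioral direction vanish, so $M$ splits into the classical endemic SIR $2\times2$ block—whose trace is $-R_0<0$ and determinant $k(R_0-1)>0$ when $R_0>1$, giving two roots with negative real part—and a scalar behavioral eigenvalue equal to the payoff at $P_4$. A short computation rewrites that payoff as $\tfrac{(R_0-R_p)(\widetilde a_1-k\widetilde a_0)}{kR_0}$ via $R_p=\tfrac{\widetilde a_1-k\widetilde a_2}{\widetilde a_1-k\widetilde a_0}$. Setting $u=\widetilde a_1-k\widetilde a_0$ and $v=\widetilde a_1-k\widetilde a_2$ (so $R_p=v/u$, $\widetilde a_0>\widetilde a_2\iff u<v$, and $\widetilde a_0<\widetilde a_1/k\iff u>0$), stability of $P_4$ reduces to $(R_0-R_p)u<0$, and a sign analysis of $u,v$ under each hypothesis yields items \ref{thm_stability_J_2i}, \ref{thm_stability_J_3ii}, \ref{thm_stability_J_3iii} together with the corresponding existence/instability claims.

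The genuinely cubic case is $P_5$, where the main work lies. Here $\bar x\in(0,1)$, $\bar I>0$, and—crucially—the payoff vanishes by the defining equation of $P_5$, forcing the $(3,3)$ entry of $M$ to be zero; together with $(1-\bar x)kR_0\bar S=k$ this also makes the $(2,2)$ entry vanish. I would then compute the coefficients of the monic cubic $-q(\lambda)=\lambda^3+b_2\lambda^2+b_1\lambda+b_0$ and apply the Routh--Hurwitz criterion. Since the trace of $M$ is $-R_p$, one gets $b_2=R_p>0$ immediately; writing $c=\bar I k R_0\bar S>0$ and $p=\bar x(1-\bar x)>0$, one finds $b_0=cp\,(\widetilde a_1-k\widetilde a_2)=cp\,v$ and $b_1=k(R_p-1)+cp\,(\widetilde a_1-\widetilde a_2)$. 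Thus $b_0>0\iff v>0$, which holds under the case-3i hypotheses since there $0<u<v$.

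The main obstacle is the remaining Hurwitz condition $b_2b_1-b_0>0$. After substitution this reduces to
\[
R_p\,k\,(R_p-1)+cp\big[(R_p-1)\widetilde a_1+(k-R_p)\widetilde a_2\big]>0 ,
\]
whose first term is positive but whose bracket is not obviously signed, because $R_p$ and $k$ need not be comparable. Here I would use the positivity of the payoff parameters $\widetilde a_1,\widetilde a_2$ together with $\widetilde a_1>k\widetilde a_2$ (which follows from $\widetilde a_1>k\widetilde a_0>k\widetilde a_2$ in case 3i) to bound the bracket below by $(R_p-1)k\widetilde a_2+(k-R_p)\widetilde a_2=R_p(k-1)\widetilde a_2\ge0$, closing the argument and giving item~\ref{thm_stability_J_3i}. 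The same expressions dispatch the remaining sub-cases cheaply: when $R_0<R_p$ (case 2) the point $P_5$ does not exist; when $v<0$ (as in case 3iii) one has $b_0<0$, so $P_5$—even where it exists—is unstable, consistent with $P_2$ and $P_4$ being the stable pair; and a sign analysis of $u,v$ shows $P_5$ fails to exist in cases 3ii and 3iv, completing the proof.
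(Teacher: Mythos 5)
Your proposal follows essentially the same route as the paper's Appendix: factoring out $\lambda=-1$, reading off the eigenvalues of $P_1$ and $P_2$ directly, splitting the cubic at $P_4$ into the classical SIR quadratic (trace $-R_0$, determinant $k(R_0-1)$) plus the scalar payoff eigenvalue $\tfrac{(R_0-R_p)(\widetilde a_1-k\widetilde a_0)}{kR_0}$, and applying Routh--Hurwitz to the full cubic at $P_5$ with the same coefficients ($b_2=R_p$, $b_0\propto \widetilde a_1-k\widetilde a_2$, and $b_1$ matching the paper's $C_1$). Your lower bound on the bracket in the condition $b_2b_1-b_0>0$ invokes $\widetilde a_2\geq 0$ at exactly the step where the paper's chain $\widetilde a_2<k\widetilde a_2<k\widetilde a_0$ relies on the same implicit positivity, so the two arguments coincide in substance.
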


\subsection{Comments on Theorem~\ref{thm_stability_J}} 

In the classic SIR model~\eqref{modelSIR}, when the basic replication rate is sufficiently low ($R_0 < 1$), the disease-free equilibrium point is stable, so the infection does not become an epidemic. As described in Theorem~\ref{thm_stability_J}, this phenomenon also occurs in system~\eqref{model3} but in this case, there are two possible disease-free equilibrium: $P_1$ (zero behavior adoption) and $P_2$ (complete behavior adoption). The values of $\widetilde{a}_0$ and $\widetilde{a}_2$ determine which one is stable.

When the disease is more infectious ($R_0>1$), the classic SIR model admits only one possibility: the endemic equilibrium is stable and the disease-free equilibrium is unstable.
Cases~\eqref{thm_stability_J_2i} and~\eqref{thm_stability_J_3ii} of the Theorem~\ref{thm_stability_J} are equivalent to this situation, since $P_4$ is equivalent to the endemic equilibrium of the classical SIR model. However, in model~\eqref{model3} some more realistic behaviors may occur. Note for example that it is possible that a disease-free equilibrium $P_1$ and the endemic equilibrium $P_4$ coexist simultaneously, both being locally stable (Theorem~\ref{thm_stability_J}~\eqref{thm_stability_J_3iv}). Figure~\ref{fig_sim_1} illustrates this interesting case. Note also that in this situation, equilibrium points $P_1$ and $P_5$ also exist but are not stable.

From Theorem~\ref{thm_stability_J}~\eqref{thm_stability_J_2ii} and~\eqref{thm_stability_J_3ii} another remarkable behavior can be observed, even  if $R_0>1$, it is possible that the system has a disease-free and unique stable equilibrium. Figure~\ref{fig_sim_3} illustrates this situation. In this case, the equilibrium $P_1$ exists and is unstable and equilibrium $P_5$ does not exit. 

\begin{figure}[ht!]
\centering
\subfigure[][Coexistence of Stable Equilibrium $P_2$ (in blue) and $P_4$ (in red).]{\includegraphics[width=0.4\textwidth]{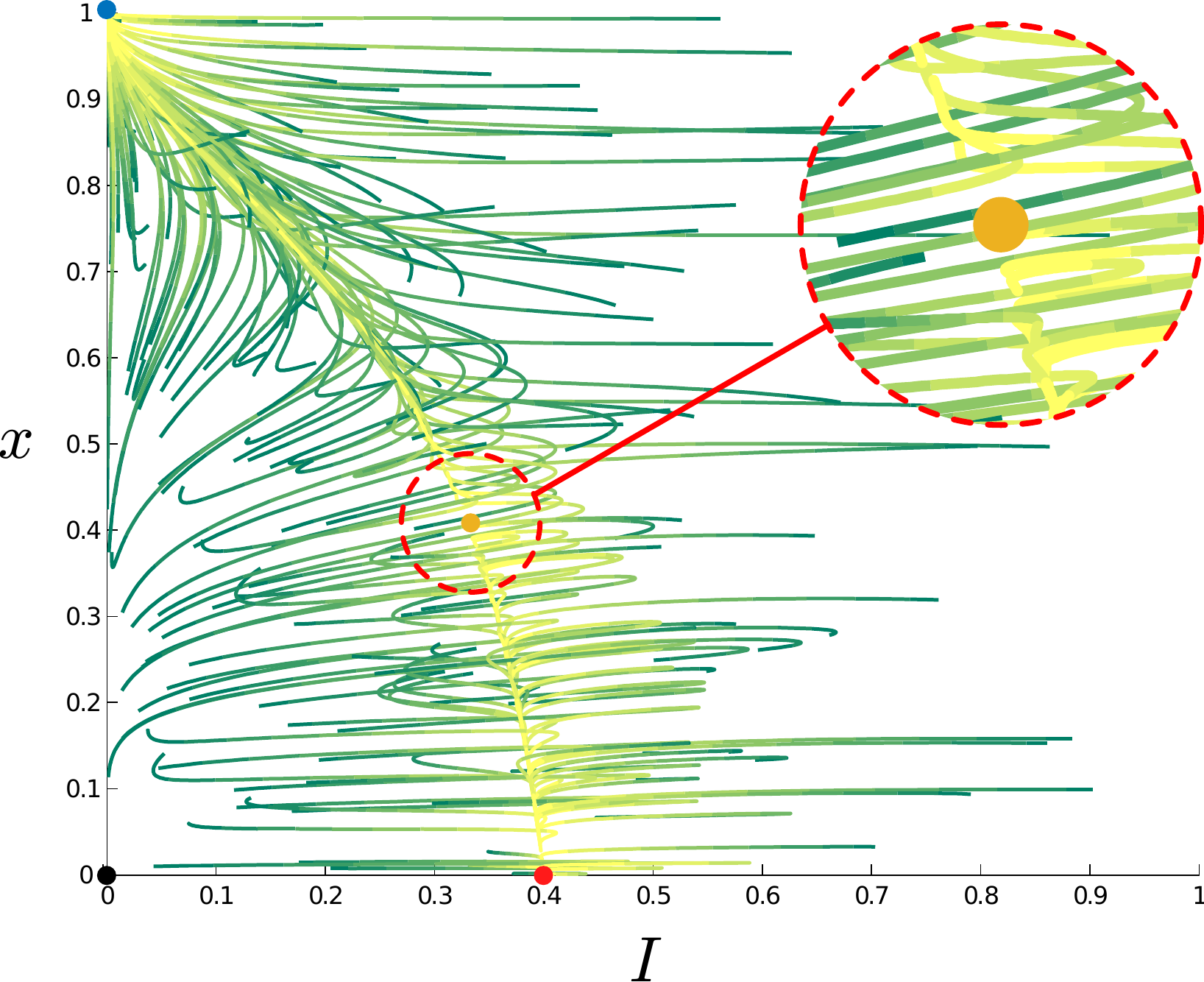}\label{fig_sim_1}}
\subfigure[][$P_2$ (in blue) stable with $R_0 > 1$.]{\includegraphics[width=0.4\textwidth]{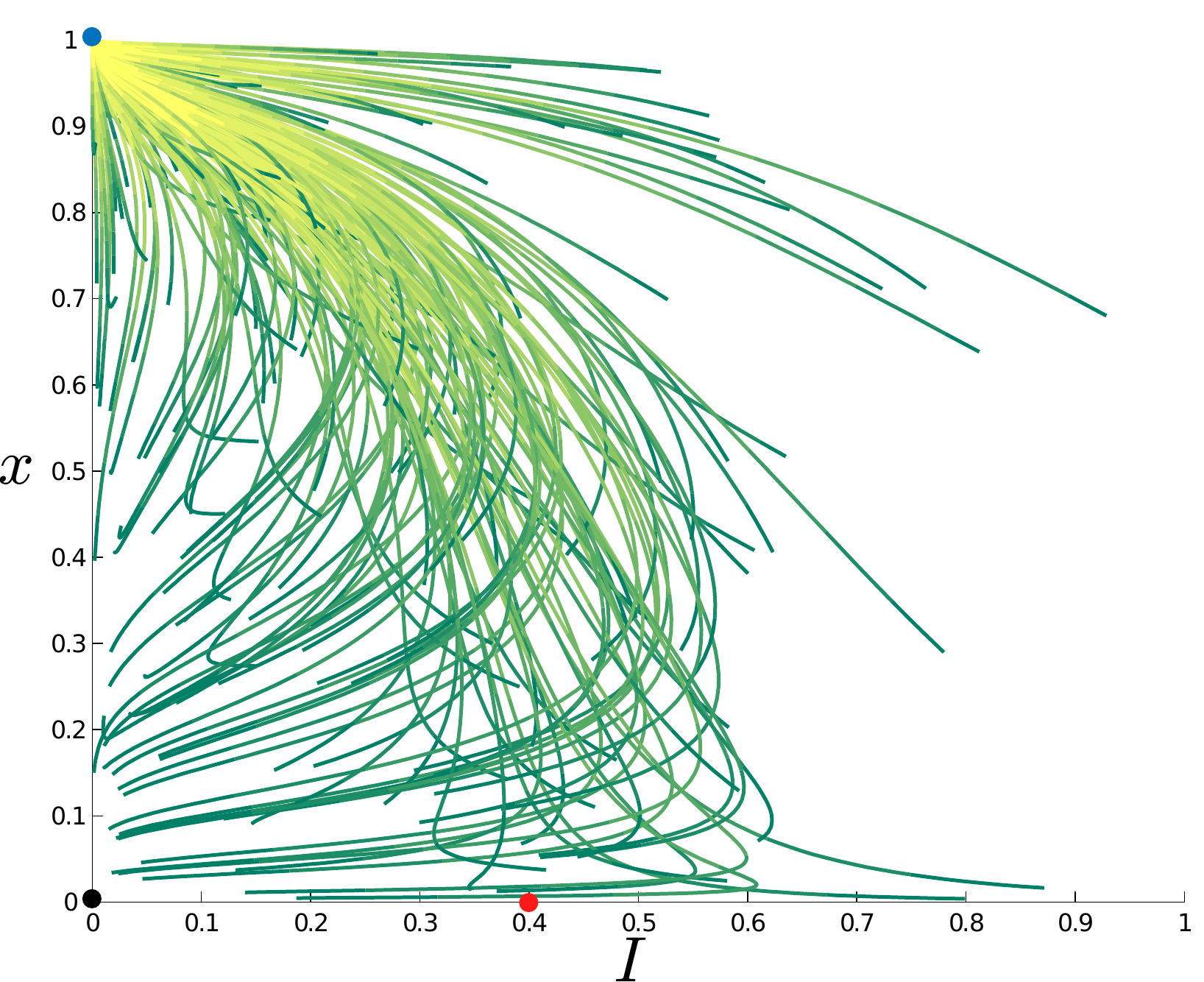}\label{fig_sim_3}}
\subfigure[][Locally Stable Equilibrium $P_5$ (in orange).]{\includegraphics[width=0.4\textwidth]{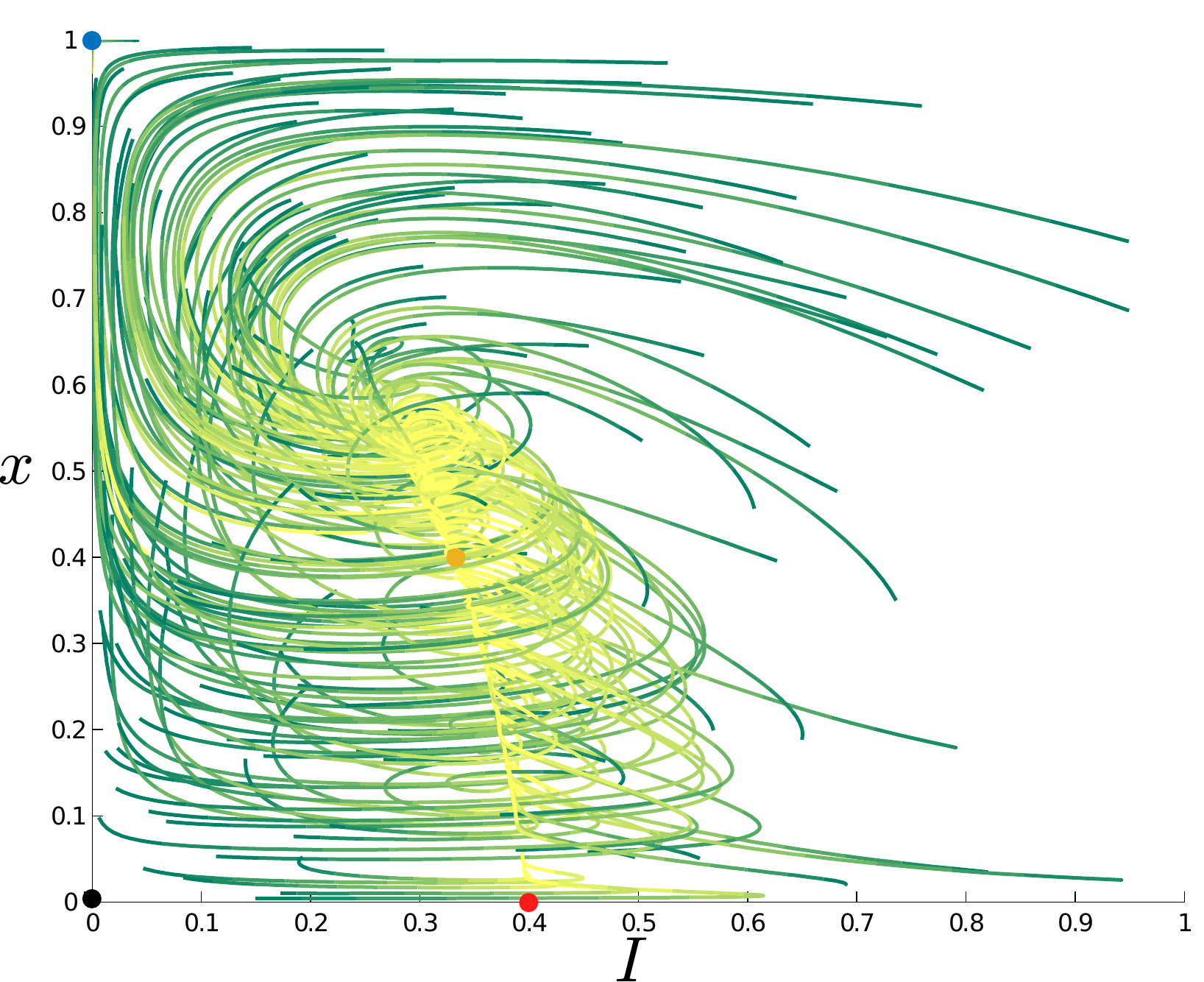}\label{fig_sim_2}}
\subfigure[][Coexistence of the Families $P_3$ (in blue) and $P_6$ (in red).]{\includegraphics[width=0.4\textwidth]{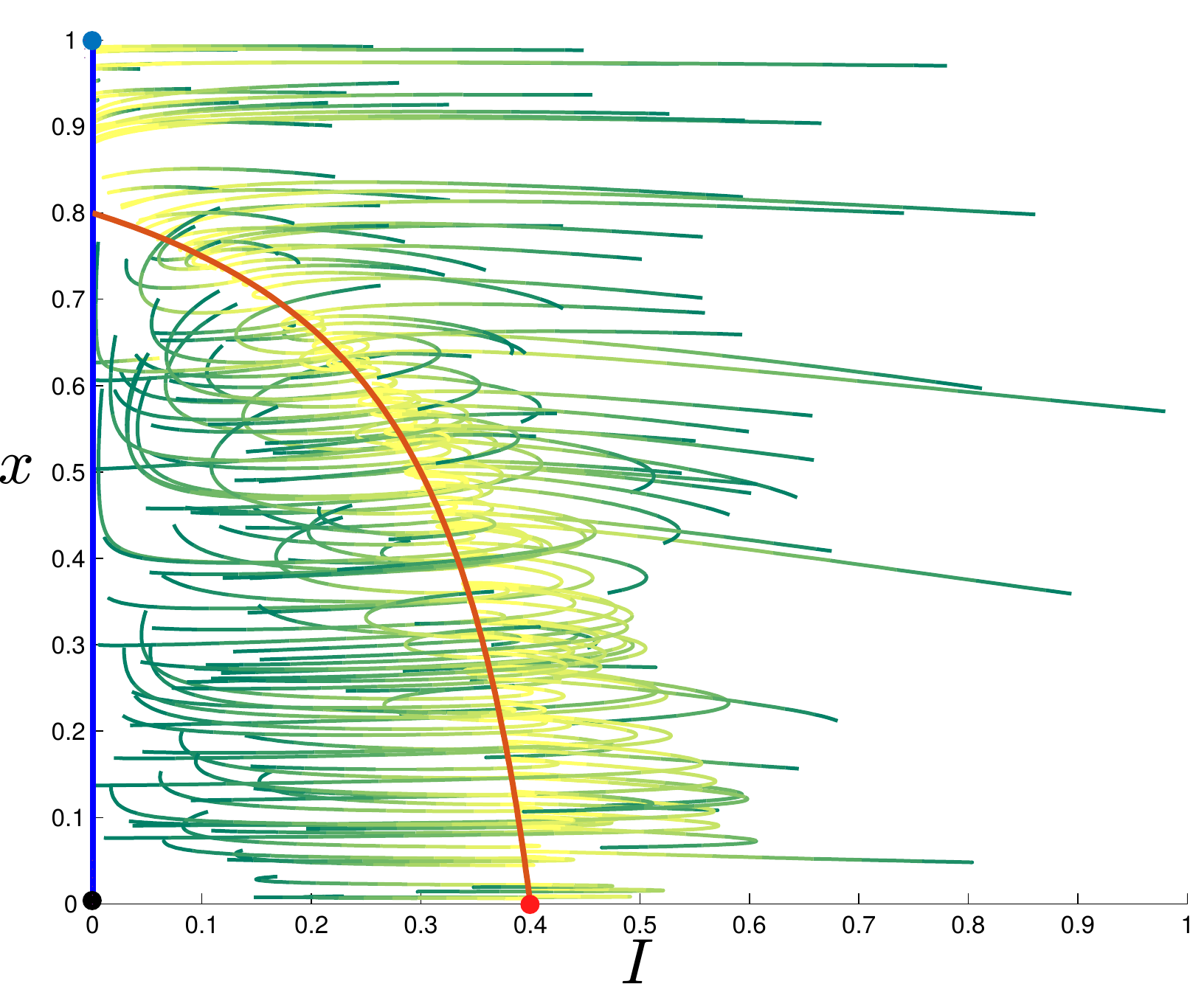}\label{fig_sim_4}}
\caption{\small {\bf Solution curves $I(\tau) \times x(\tau)$ of system~\eqref{model3}.} 
Numerical simulations of solutions with different initial conditions.
Temporal evolution is represented using dark green for initial trajectory points and a gradual variation to yellow as time increase.  
The equilibrium points $P_1$, $P_2$, $P_4$, and $P_5$ are denoted by the dots in the color black, blue, red, and orange, respectively, while the families of equilibria $P_3$ and $P_6$ are denoted by the lines in the color blue and red, respectively (when they exist).
In all cases pictured $k=2$ and $R_0 = 5$.
%a%
In~\ref{fig_sim_1} $\widetilde{a}_0=\widetilde{a}_1=1$ and $\widetilde{a}_2=2$. 
Thus $R_p = 3$ and $R_0 > \max\{1,R_p\}$, ensuring that $P_2$ and $P_4$ are both locally stable, while $P_1$ and $P_5$ are unstable (the instability of $P_5$ is highlighted in zoom).
%b%
In~\ref{fig_sim_3} $\widetilde{a}_0=1$, $\widetilde{a}_1=7$, and  $\widetilde{a}_2=2$. 
Thus $R_p = 0.6$ and $R_0 > \max\{1,R_p\}$, ensuring that $P_2$ are locally stable, while $P_1$ and $P_4$ are unstable.  
%c
In~\ref{fig_sim_2} $\widetilde{a}_0=3$, $\widetilde{a}_1=7$, and  $\widetilde{a}_2=2$. 
Thus $R_p = 3$,  $R_0 > \max\{1,R_p\}$, and $\widetilde{a}_1/k>\widetilde{a}_0>\widetilde{a}_2$, ensuring that $P_5$ are locally stable, while $P_1$, $P_2$, and $P_4$ are unstable.  
%d
In~\ref{fig_sim_4} $\widetilde{a}_0=\widetilde{a}_2=1$ and  $\widetilde{a}_1=2$. 
Thus, Theorem~\ref{thm_stability_J} ensures that whole family $P_6$ is stable.
}
\label{fig_simulations}
\end{figure}

Finally, besides $P_4$, we have the possibility of another endemic stable equilibrium: the equilibrium point $P_5$.
Assuming that conditions of Theorem~\ref{thm_stability_J}~\eqref{thm_stability_J_3i} are satisfied, equilibria $P_1$, $P_2$, and $P_5$ exists, but only the last one is stable. This equilibrium is particularly interesting because it represents a more favorable epidemiological situation than the equilibrium point $P_4$. Indeed, as $R_p < R_0$, the value of $\bar{I}$ at $P_5$ is smaller than the value of $\bar{I}$ at $P_4$.

\newpage

\subsection{On the stability of equilibria family $P_6$}

As mentioned before, the existence conditions for  equilibrium families $P_3$ and $P_6$ involve equality between some parameters which can be unrealistic. The corresponding stability analysis can not be done  using the  standard approach  based on the Jacobian matrix as in Theorem~\ref{thm_stability_J}, because in this cases, the corresponding Jacobian matrix have a null eigenvalue. In fact, solving equation~\eqref{eq_polynomial} for equilibrium $P_3$ equilibrium lead us to
$$\lambda_1 = -1, \quad \lambda_2=-1, \quad \lambda_3 = 0 \quad \text{ and }\quad \lambda_4 =k (R_0 (1-\bar{x})-1),$$ 
and in a similar fashion, the $P_6$ equilibrium points also have a null eigenvalue~\footnote{Null eigenvalues appears also for $P_1$ if $\widetilde{a}_2=\widetilde{a}_0$ or $R_0 = 1$, and for $P_2$ when $\widetilde{a}_2=\widetilde{a}_0$}.

It can be noted however that some points in the equilibrium family $P_3$ may be locally stable, as illustrated in the Figure~\ref{fig_sim_4}. When $\bar{x} = \tfrac{R_0-1}{R_0}$, then $P_6$ becomes $(1,0,0,\bar{x})$, so this point is a linking point between $P_6$ and $P_3$. Note that this point acts as the threshold between stable and unstable equilibrium points in $P_3$.

Nevertheless, a closer look at system~\eqref{model3} and to the conditions for the existence of $P_6$, allow us to determine some stability conditions for equilibria $P_6$ presented in the following theorem.

\begin{theorem} \label{thm_stability_P6}
Assume that in model~\eqref{model3} we have $R_0>1$ and $k\widetilde{a}_0=\widetilde{a}_1=k\widetilde{a}_2$, so the family of equilibria $P_6$ exists.
If $-\tfrac{R_0k}{(R_0-1)(k-1)}<\widetilde{a}_0$, then the family of equilibria $P_6$ is stable.
\end{theorem}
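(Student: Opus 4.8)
The plan is to exploit a first integral that the condition $k\widetilde{a}_0=\widetilde{a}_1=k\widetilde{a}_2$ makes available, turning the degenerate (null-eigenvalue) stability question into a hyperbolic one on two-dimensional invariant surfaces. First I would use $R=1-S-I$ on the invariant set $\Omega$ (Lemma~\ref{lemma_invariant}) to reduce~\eqref{model3} to the system in $(S,I,x)$; the parameter condition rewrites the payoff as $-\widetilde{a}_0+\widetilde{a}_1 I+\widetilde{a}_2 S=\widetilde{a}_0(S+kI-1)$. A direct computation then gives $\tfrac{d}{d\tau}\ln\tfrac{x}{1-x}=\widetilde{a}_0(S+kI-1)$ and $\tfrac{d}{d\tau}(S+I)=1-S-kI=-(S+kI-1)$, so that
\[
H(S,I,x)=\ln\frac{x}{1-x}+\widetilde{a}_0\,(S+I)
\]
is conserved along every trajectory with $x\in(0,1)$. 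This is the structural reason for the null eigenvalue: the open region is foliated by the invariant surfaces $\{H=h\}$, and since $H$ varies along $P_6$, each surface meets $P_6$ in a single equilibrium. Hence $P_6$ is transverse to the foliation, and proving it stable reduces to proving that, on each leaf, the corresponding equilibrium is asymptotically stable.

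Next I would evaluate the factor $q(\lambda)$ of~\eqref{eq_polynomial} at a generic $P_6$ point, using the equilibrium identities $(1-\bar{x})R_0\bar{S}=1$, $\bar{S}+k\bar{I}=1$ and $-\widetilde{a}_0+\widetilde{a}_1\bar{I}+\widetilde{a}_2\bar{S}=0$. Writing $A=(1-\bar{x})kR_0\bar{I}>0$, $B=kR_0\bar{S}\bar{I}>0$ and $c=\bar{x}(1-\bar{x})>0$, a short computation gives $q(\lambda)=-\lambda\bigl(\lambda^2+(A+1)\lambda+c_2\bigr)$ with $c_2=kA+Bc\,\widetilde{a}_0(k-1)$, so the full spectrum from~\eqref{eq_polynomial} is $\{-1,0,\lambda_+,\lambda_-\}$. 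The eigenvalue $\lambda=-1$ comes from the $R$-direction and is harmless, $\lambda=0$ is tangent to $P_6$ (transverse to the leaves of $H$), and $\lambda_\pm$, the roots of the quadratic, govern the dynamics inside a leaf. Stability on the leaf holds exactly when the quadratic is Hurwitz, i.e. when $A+1>0$ (always) and $c_2>0$.

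The decisive step is to extract from $c_2>0$ a bound on $\widetilde{a}_0$ that is uniform over the whole family. Substituting $\bar{S}=\tfrac{1}{R_0(1-\bar{x})}$ and simplifying, $c_2>0$ becomes $\widetilde{a}_0>-\tfrac{kR_0(1-\bar{x})}{\bar{x}(k-1)}=:g(\bar{x})$. Since $g(\bar{x})=-\tfrac{kR_0}{k-1}\bigl(\tfrac{1}{\bar{x}}-1\bigr)$ is strictly increasing on $\bigl(0,\tfrac{R_0-1}{R_0}\bigr)$ with supremum $-\tfrac{kR_0}{(R_0-1)(k-1)}$ attained as $\bar{x}\uparrow\tfrac{R_0-1}{R_0}$, the hypothesis $\widetilde{a}_0>-\tfrac{kR_0}{(R_0-1)(k-1)}$ forces $\widetilde{a}_0>g(\bar{x})$, hence $c_2>0$, for every admissible $\bar{x}$. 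Thus each equilibrium of $P_6$ is asymptotically stable within its leaf; a trajectory starting near $P_6$ remains on a nearby leaf and converges to that leaf's equilibrium, so it stays close to $P_6$, proving the family is stable. It is worth noting that $c_2>0$ is, remarkably, also the exact condition under which $P_6$ is transverse to the level sets of $H$, so the leaf-by-leaf reduction is everywhere nondegenerate.

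I expect the main obstacle to be precisely the null eigenvalue, which makes the plain Jacobian criterion of Theorem~\ref{thm_stability_J} inconclusive; the entire argument rests on recognizing the first integral $H$, which both explains the degeneracy and collapses the problem to a hyperbolic $2\times2$ analysis on each leaf. A second subtlety is that $c_2>0$ is genuinely $\bar{x}$-dependent, so the $\bar{x}$-free constant in the statement must be read as the worst case of this condition over the family; establishing the monotonicity of $g$ and computing its limit at $\tfrac{R_0-1}{R_0}$ is what pins down the stated threshold. Equivalently, the conclusion can be phrased as normal hyperbolicity of the one-dimensional invariant manifold $P_6$, whose transverse spectrum $\{-1,\lambda_+,\lambda_-\}$ lies in the open left half-plane, making $P_6$ attracting.
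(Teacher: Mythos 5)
Your proposal is correct, and it rests on the same structural fact as the paper's proof --- the identity $\tfrac{dx}{d\tau}=\widetilde{a}_0\,x(1-x)\tfrac{dR}{d\tau}$, which under $k\widetilde{a}_0=\widetilde{a}_1=k\widetilde{a}_2$ locks $x$ to $R$ along trajectories --- but the execution is genuinely different. The paper integrates that identity to write $x=x(R)$ explicitly, substitutes it into the $I$-equation to obtain a planar SIR model with recovered-dependent infection rate $f(R)=(1-x(R))kR_0$, and then invokes Theorem~4.3 of their earlier work to get the stability condition $\tfrac{df}{dR}(R^*)<\tfrac{1}{k-1}f^2(R^*)$, which unwinds to exactly your inequality $\widetilde{a}_0>-\tfrac{kR_0(1-\bar{x})}{\bar{x}(k-1)}$. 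You instead package the identity as the first integral $H=\ln\tfrac{x}{1-x}+\widetilde{a}_0(S+I)$, compute $q(\lambda)$ directly at a generic $P_6$ point to get $q(\lambda)=-\lambda\bigl(\lambda^2+(A+1)\lambda+c_2\bigr)$ with $c_2=kA+Bc\,\widetilde{a}_0(k-1)$ (I checked this factorization and the reduction of $c_2>0$ to the same $\bar{x}$-dependent inequality; both are right), and conclude by normal hyperbolicity of the curve of equilibria relative to the foliation $\{H=h\}$. Your route is self-contained, explains the null eigenvalue as the tangent direction to $P_6$ transverse to the leaves, and adds the nice observation that $c_2>0$ is simultaneously the transversality condition $\tfrac{dH}{d\bar{x}}\neq 0$; the paper's route is shorter on computation but imports an external stability theorem. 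Both proofs share the same final step (monotonicity of $g(\bar{x})=-\tfrac{kR_0(1-\bar{x})}{\bar{x}(k-1)}$ on $\bigl(0,\tfrac{R_0-1}{R_0}\bigr)$ pins the uniform threshold), and both leave the passage from leafwise asymptotic stability to stability of the whole family at the same informal level, so your argument is no less rigorous than the published one.
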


\begin{proof}
Note first that if $-\tfrac{R_0k}{(R_0-1)(k-1)}<\widetilde{a}_0$  then for all $\bar{x} \in \left(0,\tfrac{R_0-1}{R_0}\right)$ we have that
\begin{equation}\label{ineq1}
-\dfrac{(1-\bar{x})R_0k}{\bar{x}(k-1)}<\widetilde{a}_0,
\end{equation}
because $0<\bar{x}<\tfrac{R_0-1}{R_0}$ implies that $-\tfrac{1-\bar{x}}{\bar{x}}<-\tfrac{1}{R_0-1}$. Now, if  $k\widetilde{a}_0=\widetilde{a}_1=k\widetilde{a}_2$, we have from third and fourth equation of the system~\eqref{model3} that
\begin{equation*}
\begin{split}
\dfrac{dx}{d\tau} &= x(1-x)[-\widetilde{a}_0+k\widetilde{a}_0\,I+\widetilde{a}_0\,S]\\
              &=\widetilde{a}_0 x(1-x)[(k-1)I-R]\\
              &=\widetilde{a}_0 x(1-x)\dfrac{dR}{d\tau}.
\end{split}
\end{equation*}
Thus, we have that 
\begin{equation*}
\dfrac{dx}{dR}=\widetilde{a}_0 x(1-x),
\end{equation*}
and we can  express $x$ in terms of $R$ as 
$$x(R)=\dfrac{e^{\widetilde{a}_0 R}}{e^{\widetilde{a}_0 R}+C_1}.$$
\noindent
This consideration allows us to eliminate the differential equation for $x$ in~\eqref{model3} and using that $S=1-I-R$, we can reduce model~\eqref{model3} to a simplified epidemic model with a recovered-dependant infection described as
\begin{equation} \label{model5}
\begin{split}
\dfrac{dI}{d\tau} &=I[f(R)\,(1-I-R)-k]\\
\dfrac{dR}{d\tau} &= (k-1)\, I-R,\\
\end{split}
\end{equation}
where  $f(R)=(1-x(R))kR_0$.  Recovered-dependent epidemic models as~\eqref{model5} were considered by the authors in~\cite{baez2020equilibria}. In particular,  Theorem 4.3 in~\cite{baez2020equilibria} establish the following result:
\\
\textit{If $f$is positive function, differentiable on $[0,1]$ and $(I^*,R^*)$  is an endemic equilibrium point of~\eqref{model5} such that $\tfrac{df}{dR}(R^*) <\tfrac{1}{k-1}f^2(R^*)$ then $(I^*,R^*)$ is a locally stable equilibrium point.}
\\
Note that if $f(R)=(1-x(R))kR_0$, $f$ is in fact a positive differentiable function on $R$. Additionally, the following inequalities equivalences holds:
\begin{align*}
\dfrac{df}{dR}(R^*)&<\dfrac{1}{k-1}f^2(R^*)\\
-kR_0\dfrac{dx}{dR}&<\dfrac{1}{k-1}f^2(R^*)\\
-kR_0\widetilde{a}_0 \overline{x}(1-\overline{x})&<\dfrac{(1-\overline{x})^2k^2R^2_0}{k-1}\\
-\widetilde{a}_0\overline{x}&<\dfrac{(1-\overline{x})kR_0}{k-1}\\
-\dfrac{(1-\overline{x})R_0k}{\overline{x}(k-1)}&<\widetilde{a}_0,
\end{align*}
which we already showed in \eqref{ineq1} is valid when $-\tfrac{R_0k}{(R_0-1)(k-1)}<\widetilde{a}_0$. Therefore, we conclude that the whole family of equilibria $P_6$ is stable.
\end{proof}

\section{Controlling the infection through population behavior: Choosing the right payoffs}
\label{Sec:4}

In this section, we use the results in Theorem~\ref{thm_stability_J} to find conditions on the behavioral payoffs, that produce a diminishing on the infected population at a stable equilibrium. This can be interpreted as specific policy actions leading to reduce and control the infection in the long-term.
 
According to system~\eqref{model3}, an infectious disease with a small replication rate ($R_0<1$), requires no anti-infection behavior to be eradicated, since the possible stable points $P_1$ and $P_2$ are both disease-free. Nevertheless, the stability conditions in part 1. of Theorem~\ref{thm_stability_J} can be rewritten in terms of the original parameters as follows: if $a_c > a_s$, then $P_1$ is locally asymptotically stable; if $a_c < a_s$, then $P_2$ is locally asymptotically stable. This can be interpreted in terms of public policies, as a quantification of how much reduction on the fixed cost $a_c$ is necessary to achieve full adoption of an anti-infection behavior; \textbf{if $a_c$ is smaller than $a_s$, then in the long-term everyone tends to follow the prevention behavior, even if the disease is poorly infectious ($R_0<1$)}. 

We focus now on the situation when $R_0> 1$ and therefore, the infectious disease may became endemic. We aim to determine, in terms of $R_p$, $a_c$, $a_S$, $a_I$, and $a_R$, successful intervention strategies to control the disease. We consider two scenarios:

\textbf{Scenario 1:} Assume that $a_c < a_S$ and therefore  $\widetilde{a}_0 < \widetilde{a}_2$. In this case, from parts 2. and 3. in Theorem~\ref{thm_stability_J} we have two possibilities: only the disease-free equilibrium $P_2$ is stable (cases~\eqref{thm_stability_J_2ii} and~\eqref{thm_stability_J_3iii}), or $P_2$  and the endemic equilibrium $P_4$ are stables (case~\eqref{thm_stability_J_3iv}). 

From the epidemiological point of view, we would like to avoid the case of stability of an endemic equilibrium. \textbf{Therefore, to avoid the stability of $P_4$, we must ensure that $\widetilde{a}_0 < \widetilde{a}_1/k$, that is, besides  $a_c < a_S$, we need that $a_c < \tfrac{a_I}{k} + a_R\left(1-\tfrac{1}{k}\right)$}. 

This is an ideal scenario that can be interpreted as disease eradication in the long-run.

\textbf{Scenario 2:} Consider now that $a_c > a_S$ (so $\widetilde{a}_0 > \widetilde{a}_2$), and still $R_0>1$.

In this case, the locally stable points will always be endemic: $P_4$ (cases~\eqref{thm_stability_J_2i} and~\eqref{thm_stability_J_3ii}) or $P_5$ (case~\eqref{thm_stability_J_3i}). Note however that, if $P_5$ exists ($R_p < R_0$), this equilibrium will represent a better situation than $P_4$, since the proportion of infected in $P_5$ will be lower than in $P_4$. Although $R_0$ does not depend on the payoff parameters, $R_p$ does, therefore \textbf{in order to obtain a lower proportion of infected, we must seek strategies such that the payoff parameters imply $R_p < R_0$. Furthermore, it is not enough that $P_5$ exists, we want $P_5$ to be stable. Then, in addition to $\widetilde{a}_0 > \widetilde{a}_2$ and $R_p < R_0$, we must also be sure that $\widetilde{a}_0 < \widetilde{a}_1/k$}. 

Note also that the components of $P_5$ depend on the value of $R_p$ and if $R_p$ goes to 1, the proportion of infected persons predicted by this equilibrium decrease. Given an infectious disease with $R_0> 1$, whereas it is not possible to change the inequality $\widetilde{a}_0 > \widetilde{a}_2$, it is possible to decrease the  number of infected people ensuring that $\widetilde{a}_0$ be less than $\widetilde{a}_1/k$ (so $P_5$ is stable) and as close as possible to $\widetilde{a}_2$. 

In this scenario, it is possible to quantify precisely the percentage of reduction on the infected population, produced by changes in the payoff parameters, as described in the following proposition.

\begin{proposition}\label{a0_reduction} 
Consider system~\eqref{model3} and assume that $R_0> 1$ and $\widetilde{a}_2<\widetilde{a}_0<\widetilde{a}_1/k$. A reduction of $p\%$  in $\widetilde{a}_0$ produce a reduction of $\left(\tfrac{\widetilde{a}_0}{\widetilde{a}_1-k\widetilde{a}_2}\right) p$ percentage points in the infected population on the endemic equilibrium state $P_5$ and a relative reduction of $\tfrac{\widetilde{a}_0}{\widetilde{a}_0 - \widetilde{a}_2}p\%$. 
\end{proposition}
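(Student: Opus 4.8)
The plan is to reduce everything to the explicit formula for the infected proportion at $P_5$ and then exploit the fact that this formula is affine in $\widetilde{a}_0$. First I would use Lemma~\ref{lemma_equilibriums}, which gives $\bar{I} = \tfrac{1}{k}(1-\bar{S})$ together with $\bar{S} = \tfrac{1}{R_p}$ at $P_5$, so that
\begin{equation*}
\bar{I} = \frac{1}{k}\left(1 - \frac{1}{R_p}\right).
\end{equation*}
Substituting $\tfrac{1}{R_p} = \tfrac{\widetilde{a}_1 - k\widetilde{a}_0}{\widetilde{a}_1 - k\widetilde{a}_2}$ and simplifying, the factor $k$ cancels and I expect to obtain the clean expression
\begin{equation*}
\bar{I} = \frac{\widetilde{a}_0 - \widetilde{a}_2}{\widetilde{a}_1 - k\widetilde{a}_2}.
\end{equation*}
The hypotheses $\widetilde{a}_2 < \widetilde{a}_0 < \widetilde{a}_1/k$ guarantee that both numerator and denominator are positive (and, in passing, that $R_p > 1$, so that $P_5$ is a genuine endemic state), so $\bar{I}$ is a positive, strictly increasing affine function of $\widetilde{a}_0$ with slope $\tfrac{1}{\widetilde{a}_1 - k\widetilde{a}_2}$.

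The second step is to interpret a reduction of $p\%$ in $\widetilde{a}_0$ as replacing $\widetilde{a}_0$ by $\widetilde{a}_0' = \widetilde{a}_0\left(1 - \tfrac{p}{100}\right)$, leaving $\widetilde{a}_1$, $\widetilde{a}_2$, and $k$ fixed, and to denote the corresponding infected proportion by $\bar{I}'$. Since $\bar{I}$ is affine in $\widetilde{a}_0$, the absolute change is immediate:
\begin{equation*}
\bar{I} - \bar{I}' = \frac{\widetilde{a}_0 - \widetilde{a}_0'}{\widetilde{a}_1 - k\widetilde{a}_2} = \frac{\widetilde{a}_0}{\widetilde{a}_1 - k\widetilde{a}_2}\cdot\frac{p}{100}.
\end{equation*}
Because $S$, $I$, $R$ are modeled as proportions, $\bar{I}$ is already a fraction of the total population; expressing the drop $\bar{I} - \bar{I}'$ in percentage points amounts to multiplying by $100$, which yields the claimed reduction of $\left(\tfrac{\widetilde{a}_0}{\widetilde{a}_1 - k\widetilde{a}_2}\right)p$ percentage points.

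For the relative reduction I would divide the absolute change by the baseline value $\bar{I}$:
\begin{equation*}
\frac{\bar{I} - \bar{I}'}{\bar{I}} = \frac{\widetilde{a}_0}{\widetilde{a}_1 - k\widetilde{a}_2}\cdot\frac{p}{100}\cdot\frac{\widetilde{a}_1 - k\widetilde{a}_2}{\widetilde{a}_0 - \widetilde{a}_2} = \frac{\widetilde{a}_0}{\widetilde{a}_0 - \widetilde{a}_2}\cdot\frac{p}{100},
\end{equation*}
so that as a percentage the relative reduction is $\tfrac{\widetilde{a}_0}{\widetilde{a}_0 - \widetilde{a}_2}p\%$, as stated. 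I do not anticipate a genuine obstacle here: the content is a short algebraic computation resting entirely on the affine dependence of $\bar{I}$ on $\widetilde{a}_0$. The only point requiring care is keeping the two notions of reduction distinct — the absolute drop measured in percentage points of the total population versus the relative drop measured against the original infected level — together with checking that the sign conditions in the hypothesis keep $\bar{I}$ and $\bar{I}'$ positive, so that both quantities remain meaningful.
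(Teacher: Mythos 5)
Your proposal is correct and follows essentially the same route as the paper: compute $\bar{I}=\tfrac{\widetilde{a}_0-\widetilde{a}_2}{\widetilde{a}_1-k\widetilde{a}_2}$ at $P_5$ from Lemma~\ref{lemma_equilibriums}, exploit the affine dependence on $\widetilde{a}_0$ to get the percentage-point drop, and divide by $\bar{I}$ for the relative reduction. Your added remarks on the sign conditions keeping $\bar{I}$ and $\bar{I}'$ positive are a minor but sensible supplement to the paper's argument.
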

\begin{proof}

We can compute the percentage point reduction by computing the difference between the old value of the proportion of the infected population at the equilibrium point $P_5$ (denoted by $\bar{I}$)  and the new value (denoted by $\widehat{I}$) obtained after the reduction on $a_0$.  Note that 
$$ \bar{I}=\tfrac{1}{k}\left(1-\tfrac{1}{R_p}\right)=\tfrac{1}{k}\left(1-\tfrac{\widetilde{a}_1 - k\widetilde{a}_0}{\widetilde{a}_1-k\widetilde{a}_2}\right)=\tfrac{\widetilde{a}_0-\widetilde{a}_2}{\widetilde{a}_1 - k\widetilde{a}_2},$$
so we have that
\begin{equation*}
\begin{split}
\bar{I} - \widehat{I} &=\frac{\widetilde{a}_0-\widetilde{a}_2}{\widetilde{a}_1 - k\widetilde{a}_2}-\frac{(1-\frac{p}{100})\widetilde{a}_0-\widetilde{a}_2}{\widetilde{a}_1 - k\widetilde{a}_2}\\				
&=\frac{\widetilde{a}_0}{\widetilde{a}_1 - k\widetilde{a}_2}\frac{p}{100}.
\end{split}
\end{equation*}
Therefore, the reduction of  $p\%$ in $\widetilde{a}_0$ can be interpreted as a reduction, in the long-term, of  
$\left(\tfrac{\widetilde{a}_0}{\widetilde{a}_1-k\widetilde{a}_2}\right) p$ percentage points in the proportion of infected population.

The corresponding relative reduction can be obtained as
\begin{equation*}
\begin{split}
\frac{(\bar{I} - \widehat{I})100}{\bar{I}} &= \frac{\widetilde{a}_0}{\widetilde{a}_1 - k\widetilde{a}_2}\frac{\widetilde{a}_1 - k\widetilde{a}_2}{\widetilde{a}_0 - \widetilde{a}_2}p\\
&=\frac{\widetilde{a}_0}{\widetilde{a}_0 - \widetilde{a}_2}p.
\end{split}
\end{equation*}
\noindent
So, a reduction of $p\%$ in $\widetilde{a}_0$ can be interpreted as a reduction, in the long-term, of $\tfrac{\widetilde{a}_0}{\widetilde{a}_0 - \widetilde{a}_2}p\%$ in the proportion of infected population. 
\end{proof}

\textbf{Example} Recent measles outbreaks have been associated with a lack of effective vaccination, mainly due to misinformation on the inherent risks of vaccines~\cite{WHO}. In terms of the model proposed in this paper, erroneously high valuations on vaccination risk could be interpreted as a high value for $a_c$ or equivalently, a high value for $\widetilde{a}_0$. In this context, it is relevant to ask how much $\widetilde{a}_0$ must be reduced to obtain, for example, a reduction of 1 percentage point on the infected population in the long-term. Under conditions on Proposition~\ref{a0_reduction}, this desired one percentage point reduction can be obtained by a reduction of $\left(\tfrac{\widetilde{a}_1-k\widetilde{a}_2}{\widetilde{a}_0}\right)\%$ in $\widetilde{a}_0$.

To obtain useful insights from last expression, besides considering the limitations and partial validity of using the proposed model for this specific disease, one should also be able to have estimation of $k$, $\widetilde{a}_0, \widetilde{a}_1$, and $\widetilde{a}_2$. These last parameters were just introduced in the present paper and as such, there are not estimations available yet.

For illustration purposes, we present in Figure~\ref{fig_a0} a heat map for $p$, the percentage reduction on $a_0$, depending on the values of $\widetilde{a}_1$ and $\widetilde{a}_2$, that would be necessary to obtain a 1 percentage point reduction on the infected population in the long-term, considering the value of $a_0$ as a normalized quantity equals to 1 and an estimated\footnote{An estimation of $k$ could be obtained from the equality $R_0=\tfrac{\beta}{\mu+\gamma}=\tfrac{\beta}{\mu k}$ in~\eqref{rescaling} so $k=\tfrac{\beta}{R_0 \mu}$. For measles, $R_0$ is commonly considered between 12-18, and in this example we consider it equals to 18. As discussed in~\cite{guerra2017lancet}, this estimation may not be adequate for all kinds of populations.  The risk of transmission of an infectious disease is closely related to the infection rate $\beta$ and we consider the worst-case scenario where both parameters are equals. For measles, we consider this value equal to 
% 90\%~\cite{CDC}.  
90\%~\cite{perry2004clinical}
The constant $\mu$ can be estimated as the inverse of the mean life expectancy and we are considering $\mu=\tfrac{1}{76}$} value of $k$ equals to 3.8. 

From this estimations, we have for example that, if in comparison with $\widetilde{a}_0$, $\widetilde{a}_1$ is 10 times greater and $\widetilde{a}_2$ is a half,  then, to obtain a 1 percentage point reduction on the infected population in the long-term it is necessary at least a reduction of 8.1\% on $\widetilde{a}_0$.

\begin{figure}[ht!]
\centering
\includegraphics[width=0.5\textwidth]{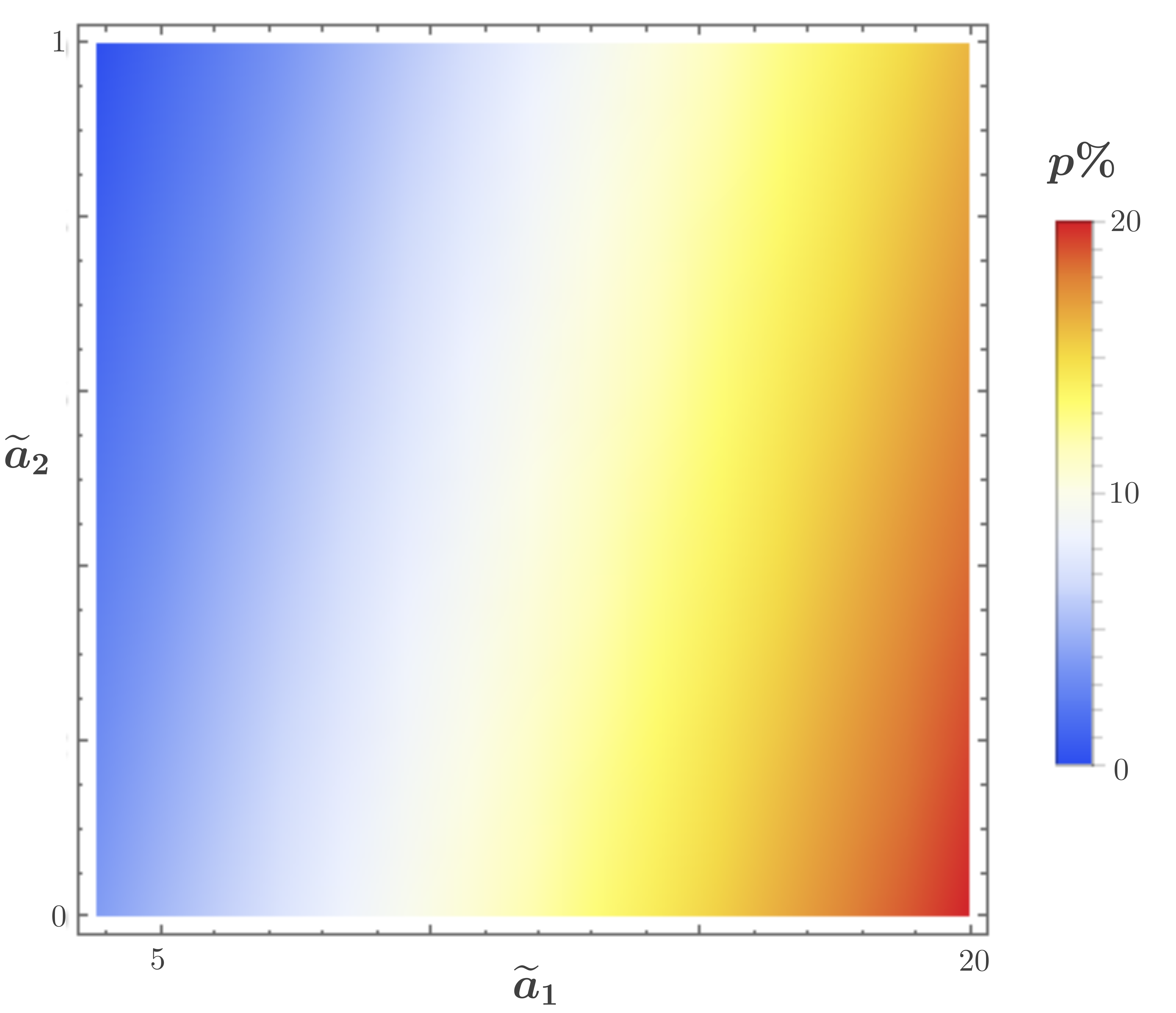}\label{fig_a0}
\caption{\small {\bf Heat map for $p$.}  Necessary percentage reduction on $a_0$ ($p$), depending on the values of $\widetilde{a}_1$ and $\widetilde{a}_2$, to obtain a reduction of 1 percentage point on the measles infected population in the long-term}
\end{figure}

\section{Final Comments}
\label{Sec:5}

The main contribution of this paper is the introduction of a mathematical model to analyze the interplay between infectious disease and anti-infection behavior adoption across the population. We focused on equilibrium states (Lemma~\ref{lemma_equilibriums}) and showed the appearance of remarkable characteristics in the context of epidemiological models (Theorem~\ref{thm_stability_J}), such as the coexistence of two locally stable endemic equilibria, the coexistence of a locally stable endemic and a disease-free equilibrium, and even the possibility of a stable continuum of endemic equilibrium points (Theorem~\ref{thm_stability_P6}).  We determine also the effects of some payoff reduction on the infected population size in an endemic equilibrium (Proposition~\ref{a0_reduction}). The expressions obtained in Proposition~\ref{a0_reduction} could be used as a threshold to estimate costs/payoff policies leading to long-term effective control of an epidemic infection through anti-infection behavior. Note that the relative reduction index obtained, depends only on the payoff parameters and not on the epidemiological parameters of the disease, therefore, it would be necessary to develop methods to estimate these payoff parameters for specific diseases and model validation from real data.

Other directions may be followed after the present work, to achieve real practical applicability of this kind of behavioral epidemiology modeling.  It would be reasonable to consider that the payoff function for the behavioral variable $x$, for the epidemiological variables, may not be linear. Also, it would be reasonable to consider that the payoff parameters are not necessarily constants and may vary on time. Different ways to model the variation and the effects of the behavioral variable $x$ can also be considered. Other models different from SIR can be suitable for specific situations, including  models considering delay differential equations to incorporate delayed effects/variations on the behavior adoption rate. We consider that the results obtained in the present work open valuable paths of research.

\appendix
\section{Proof of Theorem~\ref{thm_stability_J}}
\label{Appendix}

In this Appendix, we present the proof of Theorem~\ref{thm_stability_J}  based on the  Jacobian matrix  and characteristic polynomial \eqref{eq_polynomial}.

As mentioned before, any equilibrium point has at least one eigenvalue  $\lambda_1=-1$, and the other  eigenvalues can be studied  by analyzing the equation $q(\lambda) = 0$ for $P_1$, $P_2$, $P_4$, and $P_5$. This is described as follows.

\subsection*{Case: $P_1 = (1,0,0,0)$}
In this case, we have 
$$
q(\lambda) = \left|\begin{matrix} 
 -1 -\lambda& -k R_0  & 0 \\
 0& k R_0 -k -\lambda& 0 \\
 0 & 0 & -\widetilde{a}_0+\widetilde{a}_2 -\lambda   \\
\end{matrix}\right|.
$$
Thence, the additional eigenvalues are 
$$\lambda_2 = -1,  \quad \lambda_3 = k (R_0 -1)\quad \text{ and }\quad \lambda_4 = \widetilde{a}_2-\widetilde{a}_0.$$
\noindent
Therefore, if $R_0 < 1$ and $ \widetilde{a}_2<\widetilde{a}_0$, then all eigenvalues will be negative and, consequently, $P_1$ is locally asymptotically stable. If $R_0 > 1$ or $ \widetilde{a}_2>\widetilde{a}_0$, then $P_1$ is not stable.

\subsection*{Case: $P_2=(1,0,0,1)$}
In this case, we obtain
$$
q(\lambda)  = 
\left|\begin{matrix}
 -1 -\lambda& 0 & 0 \\
 0 & -k -\lambda& 0 \\
 0 & 0 & \widetilde{a}_0-\widetilde{a}_2 -\lambda\\
\end{matrix}
\right|.
$$
Thence, the additional eigenvalues are 
$$
\lambda_2 = -1,\quad \lambda_3 = -k, \quad \text{ and }\quad \lambda_4 = \widetilde{a}_0-\widetilde{a}_2.
$$
Therefore, it is sufficient that  $\widetilde{a}_0<\widetilde{a}_2$ for $P_2$ to be locally asymptotically stable. 
If $ \widetilde{a}_0>\widetilde{a}_2$, then $P_2$ is not stable.

\subsection*{Case: $P_4 = \left(\tfrac{1}{R_0},\tfrac{1}{k}\left(1-\tfrac{1}{R_0}\right),\left(1-\tfrac{1}{k}\right)\left(1-\tfrac{1}{R_0}\right),0\right)$}
In this case, we have that
$$
q(\lambda)\!=\!\left|
\begin{matrix}
 -\left(1-\frac{1}{R_0}\right) R_0-1-\lambda & -k & 1-\frac{1}{R_0} \\
 \left(1-\frac{1}{R_0}\right) R_0 & -\lambda & \frac{1}{R_0}-1 \\
 0 & 0 & -\widetilde{a}_0+\frac{\widetilde{a}_1 \left(1-\frac{1}{R_0}\right)}{k}+\frac{\widetilde{a}_2}{R_0}-\lambda\\
\end{matrix}\right|\!=\!q_1(\lambda)q_2(\lambda),
$$
where 
\begin{equation*}
\begin{split}
q_1(\lambda) & = \frac{R_0(\widetilde{a}_1 - \widetilde{a}_0 k)-\widetilde{a}_1 + \widetilde{a}_2 k }{k R_0}-\lambda \text{ and }\\
q_2(\lambda) &= \lambda^2+\lambda\left(\left( 1-\frac{1}{R_0}\right) R_0+1\right) +k\left(1-\frac{1}{R_0}\right) R_0.
\end{split}
\end{equation*}
Thence, the additional eigenvalues are $\lambda_2=\tfrac{R_0(\widetilde{a}_1 - \widetilde{a}_0 k)-\widetilde{a}_1 + \widetilde{a}_2 k }{k R_0}$ (the root of $q_1$) and the roots of the quadratic polynomial $q_2$. 
If $R_0> 1$, then the coefficients of $q_2$ are all positives and therefore from the Routh–Hurwitz criterion, we conclude that  eigenvalues associated with this polynomial must have negative real part. Note also that in this case  $\lambda_2 < 0$, if and only if $ R_0(\widetilde{a}_1 - \widetilde{a}_0 k)-\widetilde{a}_1 + \widetilde{a}_2 k < 0 $ or, equivalently, $R_0 (\widetilde{a}_1 - \widetilde{a}_0 k ) < \widetilde{a}_1 - \widetilde{a}_2 k$.

Therefore, $ \lambda_2 <0$ if and only if
\begin{itemize}
 \item $\widetilde{a}_1 - \widetilde{a}_0 k >0$ and $R_0<\tfrac{\widetilde{a}_1-\widetilde{a}_2k}{\widetilde{a}_1-\widetilde{a}_0k}=R_p$, or
 \item $\widetilde{a}_1 - \widetilde{a}_0 k <0$ and $R_0>\tfrac{\widetilde{a}_1-\widetilde{a}_2k}{\widetilde{a}_1-\widetilde{a}_0k}=R_p$, or
 \item $\widetilde{a}_1 - \widetilde{a}_0 k =0$ and $ \widetilde{a}_1 - \widetilde{a}_2 k >0 $.
\end{itemize}

If any of these conditions are satisfied, then $P_4$ is locally asymptotically stable.

\subsection*{Case: $P_5 = \left(\tfrac{1}{R_p},\tfrac{1}{k}\left(1-\tfrac{1}{R_p}\right),\left(1-\tfrac{1}{k}\right)\left(1-\tfrac{1}{R_p}\right), 1-\tfrac{R_p}{R_0}\right)$}

In this case we obtain
\begin{align*}
q(\lambda) 
&=
\left| 
\begin{matrix}
 -R_p -\lambda& -k & \frac{R_0}{R_p^2} (R_p-1) \\
 R_p-1 & -\lambda & -\frac{R_0}{R_p^2} (R_p-1) \\
 \frac{\widetilde{a}_2 (R_0-R_p) R_p}{R_0^2} & \frac{\widetilde{a}_1 (R_0-R_p) R_p}{R_0^2} &-\lambda
\end{matrix}
\right|=  -\lambda^3 - C_2 \lambda^2 - C_1\lambda  - C_0,
\end{align*}
\noindent
where
\begin{align*}
C_2 & = R_p \\
C_1 & = \frac{(R_p - 1)}{R_0R_p}\left((R_0-R_p)(\widetilde{a}_1-\widetilde{a}_2)+kR_0R_p\right), \text{ and } \\
C_0 & = \frac{(\widetilde{a}_1 - \widetilde{a}_2k)(R_0-R_p)(R_p-1)}{R_0R_p}.
\end{align*}

According to the Routh-Hurwitz criterion, the roots of $-q$ (also roots of $q$) will have the negative real part if, and only if,
$$
C_2 > 0, \quad C_0 > 0 \quad \text{and} \quad C_2C_1 - C_0 >0.
$$

If $1< R_p < R_0$, then we have immediately that $C_2  = R_p> 0$. 
Furthermore, in this case for $C_0>0$ it is necessary and sufficient that  
\begin{equation}\label{condP5a}
\widetilde{a}_1 - \widetilde{a}_2k >0.
\end{equation}

Additionally, note that 
\begin{align*}
C_2C_1 - C_0 
& =  \frac{(R_p - 1)(R_0-R_p)}{R_0}\left[(\widetilde{a}_0k-\widetilde{a}_2)+\frac{kR_0R_p}{R_0-R_p} \right]
\end{align*}
since  $(\widetilde{a}_1 - \widetilde{a}_0k)R_p = (\widetilde{a}_1 - \widetilde{a}_0k)\tfrac{\widetilde{a}_1 - \widetilde{a}_2k}{\widetilde{a}_1 - \widetilde{a}_0k} = \widetilde{a}_1 - \widetilde{a}_2k$.

Considering that $R_0>0$, $R_p>1$ and $R_0>R_p$, then
\begin{align} \label{condP5b}
C_2C_1 - C_0 > 0 
& \Leftrightarrow   (\widetilde{a}_0k-\widetilde{a}_2)+\frac{kR_0R_p}{R_0-R_p} >0 \nonumber\\
& \Leftrightarrow   \widetilde{a}_2 < k\left(\widetilde{a}_0+\frac{R_0R_p}{R_0-R_p}\right).
\end{align}

That is, $P_5$ is locally asymptotically stable, if and only if,~\eqref{condP5a} and~\eqref{condP5b} are satisfied  (s.t. the conditions of $P_5$ existence). 

Remember that the existence conditions for $P_5$ are
\begin{align}
\widetilde{a}_1\neq k\widetilde{a}_0\label{cond_existence_P5_1}\\
1 < R_p < R_0.  \label{cond_existence_P5_3}
\end{align}
Since $R_p = \tfrac{\widetilde{a}_1-k\widetilde{a}_2}{\widetilde{a}_1-k\widetilde{a}_0}$, to analyze inequality~\eqref{cond_existence_P5_3} we separate~\eqref{cond_existence_P5_1} in two cases:
\begin{itemize}
\item [Case 1:] $\widetilde{a}_1 - k\widetilde{a}_0 > 0$. \\
Multiplying~\eqref{cond_existence_P5_3} by $\widetilde{a}_1 - k\widetilde{a}_0$ we have
\begin{align*}
& \widetilde{a}_1-k\widetilde{a}_0 < \widetilde{a}_1-k\widetilde{a}_2 < R_0(\widetilde{a}_1-k\widetilde{a}_0)\\
\Rightarrow & -k\widetilde{a}_0 < -k\widetilde{a}_2 < R_0(\widetilde{a}_1-k\widetilde{a}_0) - \widetilde{a}_1\\
\Rightarrow & k\widetilde{a}_0 > k\widetilde{a}_2 > -R_0(\widetilde{a}_1-k\widetilde{a}_0) + \widetilde{a}_1.
\end{align*}

Joining the last inequality with the hypothesis considered in this case we have

\begin{equation}\label{cond_existence_P5_case1}
\widetilde{a}_1 > k\widetilde{a}_0 > k\widetilde{a}_2 > -R_0(\widetilde{a}_1-k\widetilde{a}_0) + \widetilde{a}_1.
\end{equation}

\item [Case 2:] $\widetilde{a}_1 - k\widetilde{a}_0 < 0$.\\
Analogously to the previous case, we will have

\begin{equation*}
\widetilde{a}_1 < k\widetilde{a}_0 < k\widetilde{a}_2 < -R_0(\widetilde{a}_1-k\widetilde{a}_0) + \widetilde{a}_1.
\end{equation*}
\end{itemize}

Note that, in order to ensure $P_5$ stability, is necessary that $\widetilde{a}_1 > \widetilde{a}_2k$ (condition~\eqref{condP5a}), which only occurs in~\eqref{cond_existence_P5_case1}.
However, if~\eqref{cond_existence_P5_case1} is satisfied, since $k>1$ and $R_0>R_p$ we have that  
$$
\widetilde{a}_2 < k\widetilde{a}_2 < k\widetilde{a}_0 < k\widetilde{a}_0 +  k\frac{R_0R_p}{R_0-R_p}. 
$$
That is, in the case~\eqref{cond_existence_P5_case1} the condition~\eqref{condP5b} is always satisfied.

Summarizing:
\begin{itemize}
\item If $\widetilde{a}_1 > k\widetilde{a}_0 > k\widetilde{a}_2 > -R_0(\widetilde{a}_1-k\widetilde{a}_0) + \widetilde{a}_1$, then $P_5$ exists and is locally asymptotically stable.
\item If $\widetilde{a}_1 < k\widetilde{a}_0 < k\widetilde{a}_2 < -R_0(\widetilde{a}_1-k\widetilde{a}_0) + \widetilde{a}_1$, then $P_5$ exists but is not stable. 
\item In other cases $P_5$ does not exist.
\end{itemize}

\bibliographystyle{elsarticle-num} % \bibliographystyle{elsarticle-num-names} 
\bibliography{references}

\begin{thebibliography}{10}
\expandafter\ifx\csname url\endcsname\relax
  \def\url#1{\texttt{#1}}\fi
\expandafter\ifx\csname urlprefix\endcsname\relax\def\urlprefix{URL }\fi
\expandafter\ifx\csname href\endcsname\relax
  \def\href#1#2{#2} \def\path#1{#1}\fi

\bibitem{ferrer2015risk}
R.~Ferrer, W.~Klein, Risk perceptions and health behavior, Curr Opin Psychol. 5
  (2015) 85--89.

\bibitem{boily2005impact}
M.~Boily, G.~Godin, M.~Hogben, L.~Sherr, F.~Bastos, The impact of the
  transmission dynamics of the hiv/aids epidemic on sexual behaviour: A new
  hypothesis to explain recent increases in risk taking-behaviour among men who
  have sex with men, Medical Hypotheses 65~(2) (2005) 215--226.

\bibitem{zaidi2013dramatic}
J.~Zaidi, E.~Grapsa, F.~Tanser, M.~Newell, T.~Barnighausen, Dramatic increase
  in hiv prevalence after scale-up of antiretroviral treatment, AIDS 27~(14)
  (2013) 2301--2305.

\bibitem{Kraemereabb4218}
M.~Kraemer, C.~Yang, B.~Gutierrez, C.~Wu, B.~Klein, D.~Pigott, L.~Plessis,
  N.~Faria, R.~Li, W.~Hanage, J.~Brownstein, M.~Layan, A.~Vespignani, H.~Tian,
  C.~Dye, O.~G. Pybus, S.~Scarpino, The effect of human mobility and control
  measures on the covid-19 epidemic in china, Science (2020).

\bibitem{Chinazzieaba9757}
M.~Chinazzi, J.~Davis, M.~Ajelli, C.~Gioannini, M.~Litvinova, S.~Merler, A.~P.
  Piontti, K.~Mu, L.~Rossi, K.~Sun, C.~Viboud, X.~Xiong, H.~Yu, M.~Halloran,
  I.~Longini, A.~Vespignani, The effect of travel restrictions on the spread of
  the 2019 novel coronavirus (covid-19) outbreak, Science (2020).

\bibitem{betsch2020how}
C.~Betsch, How behavioural science data helps mitigate the covid-19 crisis,
  Nature Human Behaviour (2020).
\newblock \href {https://doi.org/10.1038/s41562-020-0866-1}
  {\path{doi:10.1038/s41562-020-0866-1}}.

\bibitem{Remuzzi2020}
A.~Remuzzi, G.~Remuzzi, Covid-19 and italy: what next?, The Lancet (2020).
\newblock \href {https://doi.org/10.1016/S0140-6736(20)30627-9}
  {\path{doi:10.1016/S0140-6736(20)30627-9}}.

\bibitem{cowling2020impact}
B.~J. Cowling, S.~T. Ali, T.~W. Ng, T.~K. Tsang, J.~C. Li, M.~W. Fong, Q.~Liao,
  M.~Y. Kwan, S.~L. Lee, S.~S. Chiu, et~al., Impact assessment of
  non-pharmaceutical interventions against coronavirus disease 2019 and
  influenza in hong kong: an observational study, The Lancet Public Health
  (2020).
\newblock \href {https://doi.org/10.1016/S2468-2667(20)30090-6}
  {\path{doi:10.1016/S2468-2667(20)30090-6}}.

\bibitem{hsiang2020effect}
S.~Hsiang, D.~Allen, S.~Annan-Phan, K.~Bell, I.~Bolliger, T.~Chong,
  H.~Druckenmiller, L.~Y. Huang, A.~Hultgren, E.~Krasovich, et~al., The effect
  of large-scale anti-contagion policies on the covid-19 pandemic, Nature
  584~(7820) (2020) 262--267.

\bibitem{roberts2015nine}
M.~Roberts, V.~Andreasen, A.~Lloyd, L.~Pellis, Nine challenges for
  deterministic epidemic models, Epidemics 10 (2015) 49--53.

\bibitem{brauer2017mathematical}
F.~Brauer, Mathematical epidemiology: Past, present, and future, Infect Dis
  Model 2~(2) (2017) 113--127.

\bibitem{o1997epidemic}
P.~O'Neill, et~al., An epidemic model with removal-dependent infection rate,
  The Annals of Applied Probability 7~(1) (1997) 90--109.

\bibitem{ruan2003dynamical}
S.~Ruan, W.~Wang, Dynamical behavior of an epidemic model with a nonlinear
  incidence rate, Journal of Differential Equations 188~(1) (2003) 135--163.

\bibitem{pathak2010rich}
S.~Pathak, A.~Maiti, G.~Samanta, Rich dynamics of an {SIR} epidemic model,
  Nonlinear Analysis: Modelling and Control 15~(1) (2010) 71--81.

\bibitem{lahrouz2012complete}
A.~Lahrouz, L.~Omari, D.~Kiouach, A.~Belma{\^a}ti, Complete global stability
  for an {SIRS} epidemic model with generalized non-linear incidence and
  vaccination, Applied Mathematics and Computation 218~(11) (2012) 6519--6525.

\bibitem{liu2012infectious}
X.~Liu, P.~Stechlinski, Infectious disease models with time-varying parameters
  and general nonlinear incidence rate, Applied Mathematical Modelling 36~(5)
  (2012) 1974--1994.

\bibitem{seo2013stability}
Y.~I. Seo, G.~P. Cho, K.~S. Chae, I.~H. Jung, Stability of an {SIRS} epidemic
  model with a variable incidence rate and time delay, Journal of the Korean
  Society for Industrial and Applied Mathematics 17~(1) (2013) 55--65.

\bibitem{dubey2015dynamics}
B.~Dubey, P.~Dubey, U.~S. Dubey, Dynamics of an sir model with nonlinear
  incidence and treatment rate., Applications \& Applied Mathematics 10~(2)
  (2015).

\bibitem{wang2015sirs}
X.~Wang, An sirs epidemic model with vital dynamics and a ratio-dependent
  saturation incidence rate, Discrete Dynamics in Nature and Society 2015
  (2015).

\bibitem{baez2020equilibria}
A.~B\'aez-S\'anchez, N.~Bobko, On equilibria stability in an epidemiological
  sir model with recovery-dependent infection rate, Accepted for publication on
  Trends in Applied and Computational Mathematics (2020).

\bibitem{manfredi2013modeling}
P.~Manfredi, A.~d'Onofrio (Eds.)., Modeling the Interplay Between Human
  Behavior and the Spread of Infectious Diseases, Springer, 2013.

\bibitem{yusuf2012optimal}
T.~T. Yusuf, F.~Benyah, Optimal control of vaccination and treatment for an sir
  epidemiological model, World journal of modelling and simulation 8~(3) (2012)
  194--204.

\bibitem{shi2009effect}
R.~Shi, X.~Jiang, L.~Chen, The effect of impulsive vaccination on an sir
  epidemic model, Applied Mathematics and Computation 212~(2) (2009) 305--311.

\bibitem{alam2019three}
M.~Alam, K.~Kuga, J.~Tanimoto, Three-strategy and four-strategy model of
  vaccination game introducing an intermediate protecting measure, Applied
  Mathematics and Computation 346 (2019) 408--422.

\bibitem{rohani2011modeling}
P.~R. M.~J.~Keeling, Modeling Infectious Diseases in Humans and Animals,
  Princeton University Press, 2011.

\bibitem{martcheva2015introduction}
M.~Martcheva, An introduction to mathematical epidemiology, Vol.~61, Springer,
  2015.

\bibitem{anderson1992infectious}
R.~M. Anderson, B.~Anderson, R.~M. May, Infectious diseases of humans: dynamics
  and control, Oxford university press, 1992.

\bibitem{bauch2005imitation}
C.~Bauch, Imitation dynamics predict vaccinating behaviour, Proceedings of the
  Royal Society B 272 (2005) 1669--1675.

\bibitem{weibull1997evolutionary}
J.~W. Weibull, Evolutionary Game Theory, MIT Press, 1997.

\bibitem{WHO}
WHO, Measles cases spike globally due to gaps in vaccination coverage,
  https://www.who.int/news-room/detail/29-11-2018-measles-cases-spike-globally-due-to-gaps-in-vaccination-coverage,
  Last accessed on 2020-08-20 (2018).

\bibitem{guerra2017lancet}
F.~Guerra, S.~Bolotin, G.~Lim, J.~Heffernan, S.~Deeks, Y.~Li, N.~Crowcroft, The
  basic reproduction number {R0} of measles: a systematic review, Lancet Infect
  Dis. 17~(12) (2017) 420--428.

\bibitem{perry2004clinical}
R.~T. Perry, N.~A. Halsey, The clinical significance of measles: a review, The
  Journal of infectious diseases 189~(Supplement\_1) (2004) S4--S16.
\newblock \href {https://doi.org/10.1086/377712} {\path{doi:10.1086/377712}}.

\end{thebibliography}

\end{document}